\documentclass[11pt]{article}
\usepackage{textcomp}
\usepackage[leqno]{amsmath}
\usepackage[T1]{fontenc}
\usepackage{charter,eulervm}
\usepackage{setspace}
\usepackage{amsthm}
\usepackage{amssymb}
\usepackage{amsmath}
\usepackage[]{a4}

\usepackage{hyperref}
\usepackage{xcolor}
\definecolor{dark-red}{rgb}{0.4,0.15,0.15}
\definecolor{dark-blue}{rgb}{0.15,0.15,0.4}
\definecolor{medium-blue}{rgb}{0,0,0.5}
\hypersetup{
    colorlinks, linkcolor={dark-red},
    citecolor={dark-blue}, urlcolor={medium-blue}
}
\usepackage{enumerate}

\usepackage[round]{natbib}
\newtheorem{theorem}{Theorem}
\newtheorem{proposition}{Proposition}
\newtheorem{lemma}{Lemma}

\theoremstyle{definition}

\makeatletter

\def\lines@per@page{38} 
\def\prop@odd@margin{.5} 
\@settopoint\oddsidemargin \@settopoint\marginparwidth
\@settopoint\evensidemargin

\normalsize

\makeatother

\makeatletter 
\@ifundefined{DOTSB}{%
   
  \let\slimits@\displaylimits
  \let\ilimits@\nolimits }{}%
\newcommand{\genint}{}
\DeclareRobustCommand{\genint}[1]{\@optarg{\toks@{\@genint{#1}}\@geninta}[]}
\def\@geninta#1{\@optarg{\toks@\expandafter{\the\toks@{#1}}\@genintb}[]}
\def\@genintb#1{\@optarg{\the\toks@{#1}}[]}
\def\@genint#1#2#3#4#5{#1\ilimits@ 
  \@ifnotempty{#3}{_{#3}}%
  \@ifnotempty{#4}{^{#4}}%
  #5\@ifnotempty{#2}{\diffl{#2}}%
} 

\DeclareRobustCommand{\diffl}[1]{\mathinner{\mathrm d#1}}

\def\@optarg#1[#2]{%
  \def\@optarga[##1]{#1{##1}}%
  \begingroup
  \def\@optarga{\endgroup#1{#2}}%
  \futurelet\@let@token\@optargb }
\def\@optargb{\ifx[\@let@token\endgroup\fi \@optarga}

\makeatother

\DeclareMathOperator{\Int}{int}

\DeclareMathOperator{\con}{con}

\title{\Large\bfseries The Core in a Distributional Economy}

\author{
  Michael Greinecker\thanks{CEPS, ENS Paris-Saclay
  \href{mailto:michael.greinecker@ens-paris-saclay.fr}{michael.greinecker@ens-paris-saclay.fr}
  }~ and Konrad Podczeck\thanks{University of Vienna,
  \href{mailto:konrad.podczeck@univie.ac.at}{konrad.podczeck@univie.ac.at}}
}

\begin{document}
\maketitle

\begin{abstract} 
An economy, large or small, has traditionally been defined in terms of
an explicit set of agents and an assignment of characteristics to each
agent. But when individual agents are negligible, most economically
relevant properties of an economy can be defined in terms of the
distribution of characteristics alone. Agents need not be specified.

It has been frequently asserted that the distributional description of
an economy is too sparse for core analysis. Notions of coalitions and
blocking require the individualistic description of agents. This paper
shows that this is not so. The presence of blocking coalitions can be
directly identified in terms of distributions alone. Indeed, we give a
purely distributional proof of the classical core-equivalence theorem
that delivers the core-equivalence theorem for individualistic
economies as a corollary.

Our methods have applications outside of general equilibrium theory. They apply
to large matching markets and to analogs of the Shapley-value for atomless economies.
\end{abstract}

\section{Introduction}

Large economies have traditionally be defined as an assignment of
characteristics to every individual trader taken from a continuum.
However, competitive equilibria can be defined in terms of the distribution of
characteristics alone, as has been done by \citet*{hhk75} and
\citet{MR0434346}, and no specification of individual traders is
needed. The distributional description of an economy is more parsimonious,
allows us to study continuity properties of the equilibrium-correspondence,
and allows us to relate large economies to the limits of economies with finitely many traders.

So far, the core has eluded such a distributional formulation,
and a number of authors have written that such a distributional
formulation is not possible.\footnote{For example, \citet*{hhk75}
write that ``[T]he concepts of \lq coalition\rq{} and \lq to improve
upon\rq{} require the individualistic description of an economy as a
mapping which assigns to every individual agent his characteristics.''
and \citet{MR0395752} writes that ``for the equivalence theorem it is
indispensable to consider the economy in representation form.''} As we
show in this paper, it is. We characterize the role of blocking coalitions on a 
distributional implications and show that one could work with these distributional representations alone.

To understand the underlying economic issues, it helps to look at economies 
with finitely many traders and at the information that gets lost when looking at the 
distribution alone. Information is lost about the identity of individual traders and
about how many traders there are with specific characteristics, how many traders are 
represented by a single type. Without knowing the identity of a trader, we can not
track in two allocations whether the trader is better off in one rather than the other.
Without knowing how many traders are represented by a single type, we can neither
judge how competitive the economy is nor can we rely on the convexifying effect of large 
numbers.

These are not just problems one needs to address in order to understand the core, these problems
need to be addressed already to understand the simpler concept of Pareto efficiency. The problem of 
how to keep track of traders' identity has a simple solution: One simply looks at the whole joint distribution
of characteristics and the commodity bundles received in two allocations. This is enough to identify Pareto-rankings;
see Proposition \ref{Paretosimpel} below. The problem that we do not know how many traders a type represents becomes moot
if the underlying space of traders is sufficiently competitive. This can be achieved using a strengthened nonatomicity
assumption on the space of traders, the space must be superatomless; see Proposition \ref{Paretosat} below.

We also show what the usual Pareto-ordering means in a purely
distributional setting without looking at the whole joint distribution
of characteristics and the commodity bundles received in two allocations.
For short, let us refer to a measure on the
product of a space of trader's characteristics and the commodity space
as a distributional allocation, provided that its marginal on the
former space coincides with a given distribution of trader's
characteristics.
\citet{MR511487} have introduced an ordering on distributional
allocations in the spirit of first-order stochastic dominance that is
meant to capture the distributional content of the Pareto
ordering. They showed that if two allocations can be Pareto ordered,
then the induced distributional allocations satisfy their order.  Reif
and Wiesmeth also conjectured that a converse holds for some
individualistic representations of the distributional allocations. We show that this
conjecture is indeed correct in Theorem \ref{ParetoCoupling}. A celebrated theorem of
\citet{MR177430} constructs appropriate joint distributions for
stochastically ordered distributions. \citet{MR1711599} has pointed
out that the result holds also for stochastic pre-orderings, and this
is the main tool to prove Theorem \ref{ParetoCoupling}.

These results have analogs for the core with some
modifications. In particular, we can meaningfully define the core of a
distributional economy so that an allocation is in the classical core
if and only if its induced distribution is in the distributional
core. Importantly, even though we need a strengthened nonatomicity
condition on an underlying space of traders to represent conceivable
coalitions in the distribution exactly, which we do in Theorem \ref{coalitionexist},
the presence of a conceivable blocking coalition in the distribution implies the existence
of a blocking coalition even when the underlying space of traders is merely atomless; 
see Theorem \ref{approxcoalition}. Our tool for proving this is 
Lyapunov's theorem for Young measures, a result due to \citet{MR1798830}.

In order to show that our notion of a distributional core is
workable, we present a distributional core-equivalence theorem;
Theorem \ref{coreequiv}. We only assume that
preferences are continuous, irreflexive, and strictly monotone, and
that the aggregate endowment is strictly positive. No transitivity
assumption is needed.  Our proof makes no use of an ancillary
individualistic representation. One can obtain the general core-equivalence theorem
of  \citet{hildenbrand1982core} for individualistic economies with an atomless 
space of traders as a corollary. Our distributional methods are as powerful as
the individualistic approach.

The way we define blocking coalitions in a distributional economy is
closely related to various notions of generalized coalitions that have
been employed in the literature in equilibrium theory, value theory,
and the theory of stable matchings. We show that our approach allows us to give a natural
classical individualistic interpretation of the ``ideal sets'' of \citet{MR0378865},
the ``fuzzy coalitions'' of \citet{MR556865}, the ``convex set of
agents'' of \citet{MR873762}, and the ``subpopulations'' of
\citet*{MR1440419} and \citet*{che2013stable}. They can all be
interpreted as distributions induced by classical coalitions in a sufficiently rich space of
agents; the original set of agents can then be interpreted as a
space of types.

Throughout, we relate distributional allocations to individualistic allocations on a suitably rich (superatomless) underlying space of agents. \citet*{SUN2020105083} have argued that one should intepret distributional allocations as randomized allocations on an underlying non-rich space of agents. We think having a rich space of agents is economically more natural. Actually, \citet*{SUN2020105083} formulate some of their results in terms of ``rich Fubini extensions'' that require the space of agents to be superatomless. In that case, every distributional allocation admits an individualistic representation.    
\medskip

\noindent The remainder of the paper is organized as follows: We provide a distributional perspective on generalized coalitions and use it to motivate certain notions of nonatomicity in Section \ref{gencoal}. The economic environment is given in Section \ref{environ}. Section \ref{pareto} shows how to represent Pareto efficiency in distributional economies. Section \ref{core} does the same for the core. We use our characterization to provide a distributional core-equivalence in Section \ref{coequ}. All proofs can be found at the end in Section \ref{proofs}.

\section{Generalized Coalitions and Nonatomicity}\label{gencoal}

Before going into the specific context of finite-dimensional exchange economies, we put the problem into a broader context and explain the need for some heavy mathematical machinery.

Many authors have found at some point a need to generalize the
classical notion of a coalition as a measurable set of agents with
positive measure. We show that our distributional point of view allows
for a unified perspective on the existing notions of generalized
coalitions. Here are some examples of generalized coalitions coming up in economics:
\begin{enumerate}
\item Suppose you could order a continuum of players at random. For a given
set $S$ of players, the mass of players in $S$ who end up in the top
half under the random ordering should be one half of the mass of
players in $S$ almost surely. The set $S$ would be ``evenly spread''
under the random ordering, to use the language of
\citet{MR0378865}. An entirely satisfactory notion of such a random
ordering is impossible, as shown by Aumann and Shapley, and even the
concept of an ``evenly spread measurable set'' turns out to be empty
in their framework. However, they provide a convincing intuition of
the value of a nonatomic game with side-payments in terms of evenly
spread sets. In order to translate this intuition into mathematics,
Aumann and Shapley introduce \emph{ideal sets}, which are simply
measurable functions from the space of players to the unit
interval. If $g$ is such an ideal set and $i$ a player, $g(i)$ tells
us how much player $i$ belongs to the ideal set $g$. Ordinary
measurable sets are simply identified with their indicator functions.

\item With infinite dimensional commodity spaces, a non-atomic measure space
of agents may not be sufficient for perfect competition, markets might
fail to be ``thick.'' An interesting test for thick markets was
introduced by \citet{MR873762}. Gretzky and Ostroy extend the
underlying nonatomic probability space of agents to a \emph{convex
space of agents} consisting of simple functions with values in the
unit interval. Markets are then considered thick when the space of
allocations remains essentially unchanged when going from the usual
space of agents to the convex space of agents. The latter is meant to
express perfect divisibility on the level of coalitions and can be
seen as a continuum replica version of the original economy.

\item A distributional approach is taken by \citet*{MR1440419} and
\citet*{che2013stable}. In both papers, there is a compact metric
space of characteristics $K$, and the population of agents is given by
a nonnegative Borel measure $\mu$ on $K$. A \emph{subpopulation} in
these papers is then a nonnegative Borel measure $\nu$ on $K$ such
that $\nu(A)\leq\mu(A)$ for every Borel set $A\subseteq K$. The actual
space of agents is not formalized. By the Radon-Nikodym theorem,
subpopulations can be identified with (equivalence classes of)
measurable functions with values in the unit interval.
\end{enumerate}

Let $(X,\mathcal{X},\lambda)$ be a probability 
space. A \emph{generalized coalition} is a $\lambda$-equivalence class
of measurable functions $c:X\to [0,1]$. A \emph{coalition} is a
$\lambda$-equivalence class of measurable functions
$c:X\to\{0,1\}$. For \citet{MR0378865}, generalized coalitions are
\emph{ideal sets}. \citet{MR873762} added the subclass of generalized
coalitions corresponding to simple functions to ordinary
coalitions, to enrich the \emph{indivisible space of agents} to the
\emph{convex space of agents}. \citet{MR556865} generalized
coalitions to \emph{fuzzy coalitions}, a terminology inspired by
the corresponding notion of a \emph{fuzzy set} of \citet{MR0219427}.

We show that generalized coalitions on a probability space $(X,\mathcal{X},\lambda)$ can be viewed as distributions induced by individualistic coalitions when one interprets elements of $X$ as types of individuals instead of as proper individuals.\footnote{A similar distributional interpretation of generalized coalitions
has already been given by \citet{MR1300524} for the special case of a finite space $X$.} The setting we look at is that there is an underlying probability space of agents  $(T,\Sigma,\nu)$ and a measurable function $\phi:T\to X$ with distribution $\lambda$. Under this distributional point of view, generalized coalitions are naturally treated as subpopulations and we do so in what follows. We show that under a suitable nonatomicity hypothesis, every subpopulation of $(X,\mathcal{X},\lambda)$ corresponds to a proper coalition in $\Sigma$.  

The relevant nonatomicity notion goes back to \citet{MR0006595} and a lemma from that paper (Lemma \ref{maharam} in Section \ref{proofs}) will be our main tool. Let $(T,\Sigma,\nu)$ be a finite measure space and $\Sigma'\subseteq\Sigma$ a sub-$\sigma$-algebra. We say that $\nu$ is \emph{relatively atomless} over $\Sigma'$ if for every $A\in\Sigma$ such that $\nu(A)>0$, there exists $B\in\Sigma$ such that $B\subseteq A$ and $\nu(B\Delta C)>0$ for all $C\in\Sigma'$. The relation to the classical notion of being atomless is a follows: The measure $\nu$ is atomless if and only if for every $A\in\Sigma$, $\nu$ is relatively atomless over $\{T,A,T\setminus A,\emptyset\}$.  

For a function $f$ into a measurable space, we write $\sigma(f)$ for the $\sigma$-algebra generated by $f$, the coarsest $\sigma$-algebra on the domain of $f$ that makes $f$ measurable. The notion of being relatively atomless makes it possible to represent subpopulations of a distribution induced by a funtion from the space of agents to a space of characteristics as actual coalitions in the space of agents. 

\begin{theorem}\label{gencoalition}Let $(T,\Sigma,\nu)$ be a probability space, $(X,\mathcal{X})$ a measurable space and $\phi:T\to X$ a measurable function. Suppose $\nu$ is relatively atomless over $\sigma(\phi)$. If $\mu$ is a measure on $(X,\mathcal{X})$ such that $\mu(A)\leq\nu\circ\phi^{-1}(A)$ for all $A\in\mathcal{X}$, then there exists a set $C\in\Sigma$ such that \[\mu(A)=\nu\big(\phi^{-1}(A)\cap C\big)\] for all $A\in\mathcal{X}$.
\end{theorem}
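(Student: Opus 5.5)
The plan is to reduce the problem to finding a set with a prescribed conditional probability given $\sigma(\phi)$, and then to produce that set with Maharam's lemma (Lemma \ref{maharam}).

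First, write $\lambda=\nu\circ\phi^{-1}$. Since $\mu\le\lambda$, $\mu$ is absolutely continuous with respect to $\lambda$, and both measures are finite. The Radon--Nikodym theorem then gives a measurable $g:X\to[0,1]$ with $\mu(A)=\int_A g\,d\lambda$ for all $A\in\mathcal{X}$; we may take $g$ with values in $[0,1]$ because $\mu\le\lambda$. Set $h=g\circ\phi$, which is $\sigma(\phi)$-measurable with values in $[0,1]$. The change of variables formula gives
\[
\mu(A)=\int_{\phi^{-1}(A)} h\,d\nu
\]
for every $A\in\mathcal{X}$. So it suffices to find $C\in\Sigma$ with $\nu(\phi^{-1}(A)\cap C)=\int_{\phi^{-1}(A)}h\,d\nu$ for all $A$. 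Since the sets $\phi^{-1}(A)$ make up exactly $\sigma(\phi)$, this says
\[
E[1_C\mid\sigma(\phi)]=h \quad \text{a.s.}
\]

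Second, we produce such a $C$. The natural tool is Maharam's lemma, which I expect says the following: if $\nu$ is relatively atomless over $\Sigma'$, then for every $\Sigma'$-measurable $h$ with values in $[0,1]$ there is $C\in\Sigma$ with $\nu(C\cap B)=\int_B h\,d\nu$ for all $B\in\Sigma'$. Applied with $\Sigma'=\sigma(\phi)$, this finishes the proof.

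If the lemma is instead stated in a weaker form, for example as a "Lyapunov/intermediate value" property $\nu(C\cap B)=\tfrac12\nu(B)$, or only for a single $B$, I would construct $C$ by hand. I would first treat simple $h$ with rational or dyadic values, then pass to general $h$ by a monotone limit, building an increasing sequence of sets $C_n$ with
\[
E[1_{C_n}\mid\Sigma']=h_n\uparrow h.
\]
At each stage, relative atomlessness is applied inside the set $B\setminus C_{n}$ for $B\in\Sigma'$, to carve out additional mass with prescribed conditional expectation $h_{n+1}-h_n$.

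The main obstacle is this construction, in the case where Maharam's lemma is not available directly in the conditional-expectation form. The key difficulty is to show that relative atomlessness over $\Sigma'$ lets one split any set $D$ into two pieces whose conditional expectations given $\Sigma'$ are each $\tfrac12E[1_D\mid\Sigma']$. This requires a Zorn or exhaustion argument together with a conditional analogue of Sierpiński's theorem. I would take this from \citet{MR0006595} rather than reprove it.
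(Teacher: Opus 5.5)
Your proposal is correct and is essentially the paper's proof: the paper also takes the density $g\colon X\to[0,1]$ of $\mu$ with respect to $\nu\circ\phi^{-1}$, forms the measure on $\Sigma$ with density $g\circ\phi$ with respect to $\nu$, and applies Lemma~\ref{maharam} over $\sigma(\phi)$; that lemma is stated for a measure on $\Sigma'$ dominated by $\nu$, which is exactly your density form, so your fallback construction is not needed. One caution applies equally to you and to the paper: the lemma needs Maharam's notion of relative atomlessness, with the condition reading $\nu\bigl(B\Delta(A\cap C)\bigr)>0$ for all $C\in\sigma(\phi)$, because with the literal condition $\nu(B\Delta C)>0$ the statement fails, e.g.\ for $T=[0,1]\times\{0,1\}$ with $\nu$ the product of Lebesgue measure and the uniform measure on $\{0,1\}$, $\phi$ the first projection and $\mu=\frac14\,\nu\circ\phi^{-1}$ (the conditional probability of any $C$ given $\phi$ takes only the values $0,\frac12,1$, never $\frac14$).
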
 

The assumption that $\nu$ is relatively atomless over $\sigma(\phi)$ cannot be weakened; the conclusion of Theorem \ref{gencoalition} is actually equivalent to $\nu$ being relatively atomless over $\sigma(\phi)$. The converse to Theorem \ref{gencoalition} is the following proposition.

\begin{proposition}\label{relatomless} Let $(T,\Sigma,\nu)$ be a probability space, $(X,\mathcal{X})$ a measurable space and $\phi:T\to X$ a measurable function.  If for every measure $\mu$ on $(X,\mathcal{X})$ such that $\mu(A)\leq\nu\circ\phi^{-1}(A)$ for all $A\in\mathcal{X}$ there exists a set $C_\mu\in\Sigma$ such that \[\mu(A)=\nu\big(\phi^{-1}(A)\cap C_\mu\big)\] for all $A\in\mathcal{X}$, then $\nu$ is relatively atomless over $\sigma(\phi)$.
\end{proposition}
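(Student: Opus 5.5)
We argue by contraposition: assuming $\nu$ is \emph{not} relatively atomless over $\sigma(\phi)$, we construct a single measure $\mu\le\nu\circ\phi^{-1}$ that cannot be represented by any $C\in\Sigma$. The natural candidate is ``half of a bad set''. By the definition of relative atomlessness, failure means there is $A\in\Sigma$ with $\nu(A)>0$ such that every $B\in\Sigma$ with $B\subseteq A$ agrees up to a $\nu$-null set with some member of $\sigma(\phi)$. That is, for each such $B$ there is $F\in\mathcal{X}$ with $\nu(B\,\Delta\,\phi^{-1}(F))=0$. In particular, for $A$ itself we fix $E_0\in\mathcal{X}$ with $\nu(A\,\Delta\,\phi^{-1}(E_0))=0$.

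Define $\mu(E)=\tfrac12\,\nu(\phi^{-1}(E)\cap A)$ for $E\in\mathcal{X}$. This is clearly a measure with $\mu(E)\le\nu(\phi^{-1}(E))$, so the hypothesis of the proposition gives a set $C=C_\mu\in\Sigma$ with $\mu(E)=\nu(\phi^{-1}(E)\cap C)$ for all $E$. The goal is to derive a contradiction.

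Put $B=A\cap C\subseteq A$. By the choice of $A$ there is $F\in\mathcal{X}$ with $\nu(B\,\Delta\,\phi^{-1}(F))=0$. Replacing $F$ by $F\cap E_0$ changes $\phi^{-1}(F)$ only by a null set, since $B\subseteq A\approx\phi^{-1}(E_0)$; so we may assume $F\subseteq E_0$. We now compute $\mu(F)$ in two ways.

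\begin{itemize}
\item From the representation by $C$: $\mu(F)=\nu(\phi^{-1}(F)\cap C)=\nu(B\cap C)=\nu(B)$, using $\phi^{-1}(F)\approx B\subseteq C$.
\item From the definition of $\mu$: $\mu(F)=\tfrac12\nu(\phi^{-1}(F)\cap A)=\tfrac12\nu(B)$.
\end{itemize}

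Hence $\nu(B)=\tfrac12\nu(B)$, so $\nu(A\cap C)=0$. Evaluating at $E_0$ then gives
\[
\tfrac12\nu(A)=\mu(E_0)=\nu(\phi^{-1}(E_0)\cap C)=\nu(A\cap C)=0,
\]
which contradicts $\nu(A)>0$. Therefore $\nu$ must be relatively atomless over $\sigma(\phi)$.

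The only delicate step is the bookkeeping of null sets. We must make sure the set $F$ representing $A\cap C$ can be taken inside $E_0$, and that every ``$\approx$'' replacement is justified by the fact that it alters the sets only on a $\nu$-null set. No earlier result is needed; the argument is a direct unpacking of the definition of relative atomlessness.
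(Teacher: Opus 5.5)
Your proof is correct and is essentially the paper's argument: take half of (a piece of) $\nu\circ\phi^{-1}$, represent it by $C_\mu$, approximate $A\cap C_\mu$ by some $\phi^{-1}(F)$, and compute $\mu(F)$ in two ways to force $\nu(F)=\tfrac12\nu(F)$. The paper uses the unrestricted $\mu=\tfrac12\,\nu\circ\phi^{-1}$ and shows $\nu(A\cap C_\mu)=\tfrac12\nu(A)>0$ first, whereas you restrict $\mu$ to $A$, derive $\nu(A\cap C)=0$, and get the contradiction at $E_0$ afterwards; your normalization $F\subseteq E_0$ is harmless but not actually needed.
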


Theorem \ref{gencoalition} is stated in terms of a single function $\phi$ from agents to characteristics. However, to study exchange economies from a distributional perspective, we want to be able to have a similar result that applies at once to all coalitions. So we need a probability space that is relative atomless with respect to every $\sigma$-algebra generated by an allocation. In our context, this means that we need a probability space that is relatively atomless over any countably generated sub-$\sigma$-algebra. These are the superatomless probability spaces. The notion of a superatomless measure space, as introduced by \citet{integ4}, is most naturally expressed in the
language of measure algebras.\footnote{Equivalent notions have been introduced in various fields of mathematics. Footnote 4 of \citet{MR2959114} provides pointers to various equivalent notions; one of them (saturation) will be used below.} Measure algebras are the result of
identifying measurable sets whose symmetric difference has measure
zero. By identifying measurable sets with their indicator functions,
this is a special case of identifying measurable functions that agree
almost everywhere. We use this identification to formulate
superatomlessness in more familiar terms.\footnote{The equivalence to
the original definition is proven in \citet{integ4}.}  Let
$(T,\Sigma,\nu)$ be a nontrivial finite measure space and $L_1(\nu)$ be the
corresponding space of equivalence classes of integrable functions
endowed with the usual $L_1$-norm $\|\cdot\|_1$. For each
$A\in\Sigma$, let $L_1(A,\nu)$ be the subspace of $L_1(\nu)$
consisting of elements vanishing outside of $A$. Now $(T,\Sigma,\nu)$
or $\nu$ is \emph{superatomless} if $L_1(A,\nu)$ is non-separable for
all $A\in\Sigma$ with $\nu(A)>0$.\footnote{A wide variety of
conditions equivalent to being superatomless can be found in the
literature. Footnote 4 in \citet{MR2959114} lists most of the known
equivalent conditions.} The following is implicitly well-known; we supply a proof for the 
reader's convenience.
\begin{proposition}\label{superrelatomless} A nontrivial finite measure space $(T,\Sigma,\nu)$ is superatomless if
and only if it is relatively atomless over any countably generated sub-$\sigma$-algebra $\Sigma'\subseteq\Sigma$.
\end{proposition}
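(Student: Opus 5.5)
We prove the two implications separately; both reduce to the behaviour of $L_1$ over a countably generated $\sigma$-algebra. The key elementary fact is this: if $\Sigma'\subseteq\Sigma$ is countably generated, say by an algebra $\{E_n\}$, then $L_1(\Sigma',\nu)$ (equivalence classes of $\Sigma'$-measurable integrable functions) is a separable closed subspace of $L_1(\nu)$. Indeed, rational linear combinations of indicators of the countably many sets $E_n$ are dense, by approximating $\Sigma'$-sets in measure by sets of the generating algebra.

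\emph{Superatomless $\Rightarrow$ relatively atomless.} Let $\Sigma'$ be countably generated and suppose, for contradiction, that $A\in\Sigma$ with $\nu(A)>0$ is such that every measurable $B\subseteq A$ coincides up to a null set with some $C_B\in\Sigma'$. Then every indicator $1_B$ with $B\subseteq A$ lies in the separable space $L_1(\Sigma',\nu)$. Simple functions vanishing outside $A$ are dense in $L_1(A,\nu)$, so $L_1(A,\nu)\subseteq L_1(\Sigma',\nu)$. A subspace of a separable metric space is separable, so $L_1(A,\nu)$ is separable, which contradicts superatomlessness.

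\emph{Relatively atomless $\Rightarrow$ superatomless.} Suppose $L_1(A,\nu)$ is separable for some $A$ with $\nu(A)>0$. Choose a countable dense set $\{f_n\}$ in $L_1(A,\nu)$. Let $\Sigma'$ be the $\sigma$-algebra generated by $A$ together with all sets $\{f_n>q\}$ for $q$ rational; it is countably generated. Each $f_n$ is $\Sigma'$-measurable, so $L_1(A,\nu)$ is contained in the closure of $L_1(\Sigma',\nu)$, and this closed subspace is $L_1(\Sigma',\nu)$ itself. Hence for every measurable $B\subseteq A$, $1_B$ is a.e.\ equal to a $\Sigma'$-measurable function $g$. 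The function $g$ can be taken $\{0,1\}$-valued by replacing it with $1_{\{g=1\}}$, so $B$ coincides a.e.\ with a set in $\Sigma'$. This contradicts relative atomlessness over $\Sigma'$. (Nontriviality is needed only so that the definition is not vacuous.)

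The one step requiring care is the closedness claim: a limit in $L_1$ of $\Sigma'$-measurable functions has a $\Sigma'$-measurable representative. To see this, pass to an a.e.\ convergent subsequence and take the $\limsup$, which is $\Sigma'$-measurable. It then agrees a.e.\ with the limit, where ``a.e.'' is with respect to $\nu$ on $\Sigma$, and this is exactly the identification used in the definition of relative atomlessness.
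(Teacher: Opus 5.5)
Your proof is correct and takes essentially the same route as the paper. In both directions the argument reduces to the separability of $L_1$ over a countably generated $\sigma$-algebra and to the indicators $1_B$ with $B\subseteq A$; your contradiction arguments are simply the contrapositives of the paper's direct ones. You also make explicit one step the paper uses tacitly when it concludes that $1_B$ is not in the closed span $L$: the $\Sigma'$-measurable classes form a closed subspace of $L_1(\nu)$.
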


There is a very useful property introduced by  \citet{hooverkeisler}, who termed it \emph{saturation}, that characterizes superatomless probability spaces: 

\begin{trivlist}
\item \emph{A measure space $(T,\Sigma,\nu)$ is superatomless if and
only if whenever $X$ and $Y$ are Polish spaces and $\kappa$ is a Borel
measure on $X\times Y$ such that its marginal on $X$ equals the
distribution of some measurable function $f: T\to X$, there is a
measurable function $g: T\to Y$ such that $\kappa$ is the distribution
of the function $t\mapsto\big(f(t),g(t)\big)$.}\footnote{The equivalence follows from
 \citet[Corollary 4.5]{hooverkeisler}.}
\end{trivlist}

Widely used probability spaces such as the unit interval with the Borel sets and Lebesgue measure are not saturated. That this might
cause problems when representing individualistic economies with such spaces as agents in distributional form has already been observed in \citet[Section 2.16.]{MR0395752}.

\section{Economic Environment}\label{environ}

The commodity space is $\mathbb{R}^l$ and each trader has the same
consumption set $E=X=\mathbb{R}^l_+$. These spaces are endowed with their natural Euclidean topologies. Every topological space is viewed as being endowed with its Borel $\sigma$-algebra. We also use notation like $X'$
to distinguish different occurrences of $X$ in a product space. We let
$\mathcal{P}$ be the set of irreflexive relations on $X$ with a
relatively open graph, endowed with the coarsest topology such that
the set
\[\{(\succ,x,y)\in\mathcal{P}\times X\times X\mid x\succ y\}\] is open in
the product topology. Under this topology, $\mathcal{P}$ is compact
and metrizable. The subspace $\mathcal{P}^*$ of $\mathcal{P}$
consisting of asymmetric and negatively transitive relations is a
$G_\delta$-subset of $\mathcal{P}$ and,  therefore, a Polish subspace by
Alexandroff's lemma. Note that an asymmetric and negatively transitive relation is simply the assymmetric part of a complete and transitive relation---the usual assumption on preference relations. Similarly, the subspace
$\mathcal{P}_{\textnormal{mo}}$ of strictly monotone preferences, those 
preferences for which $x>y$ implies $x\succ y$, is a $G_\delta$-subset of $\mathcal{P}$, 
therefore also a Polish subspace. Also, we let
$\mathcal{P}_{\textnormal{mo}}^*=\mathcal{P}^*\cap
\mathcal{P}_{\textnormal{mo}}$. These results all follow from
\citet[1.2]{MR0389160}.

An \emph{individualistic economy} consists of a probability space
$(T,\Sigma,\nu)$ of traders and a measurable function
$\mathcal{E}:T\to\mathcal{P}\times E$ such that the second coordinate
function is $\nu$-integrable. We let $c$ be the first coordinate
function and $\eta$ be the second coordinate function of
$\mathcal{E}$. We also write $\succ_t=c(t)$ and $e_t=\eta(t)$. We call
$\nu\circ\mathcal{E}^{-1}$ the \emph{distribution} of the economy. An
\emph{individualistic allocation} $f$ for the economy is a measurable
function $f:T\to X$ such that
\[\int f~\textnormal{d}\nu=\int \eta~\textnormal{d}\nu.\]

We also introduce distributional versions of these concepts. A
\emph{distributional economy} is a Borel probability measure $\mu$ on
$\mathcal{P}\times E$ such that the canonical projection
$\pi_E:\mathcal{P}\times E\to E$ is $\mu$-integrable. A
\emph{distributional allocation} for the distributional economy $\mu$
is a Borel Probability measure $\alpha$ on $\mathcal{P\times X}$ such
that $\alpha$ and $\mu$ have the same $\mathcal{P}$-marginal, and such
that
\[\int \pi_{X}~\textnormal{d}\alpha=\int \pi_E~\textnormal{d}\mu,\]
with $\pi_X$ being the projection onto $X$. We also define a \emph{
reallocation} for the distributional economy $\mu$ to be a Borel
probability measure on $\mathcal{P}\times E\times X$ whose
$\mathcal{P}\times E$-marginal is $\mu$ and whose
$\mathcal{P}\times X$-marginal is a distributional allocation for the
distributional allocation $\mu$.

\section{Pareto Efficiency}\label{pareto}

Fix an individualistic economy and let $f$ and $f'$ be individualistic
allocations. We write $f\succeq_P f'$ if
the set of $t$ such that $f'(t)\succ_t f(t)$ has $\nu$-measure
zero. The asymmetric part
$\succ_P$ of $\succeq_P$ represents the usual relation of \emph{Pareto
dominance} and a $\succeq_P$-maximal allocation is \emph{Pareto
efficient}. It is essential for a treatment of Pareto efficiency that
we allow for a measure zero exceptional set because feasibility is
defined by integration. The following is straightforward.
\begin{proposition}\label{Paretosimpel}Let $f$ and $f'$ be
individualistic allocations. Then $f\succeq_P f'$ if and only if the
set
\[\big\{(\succ,x,x')\mid x'\succ x\big\}\]
has $\nu\circ (c,f,f')^{-1}$-measure zero.
\end{proposition}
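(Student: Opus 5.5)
The plan is to reduce the statement to the change-of-variables identity for image measures. Let
\[
G=\big\{(\succ,x,x')\in\mathcal{P}\times X\times X\mid x'\succ x\big\}.
\]
Then
\[
(c,f,f')^{-1}(G)=\{t\in T\mid f'(t)\succ_t f(t)\}.
\]
By the definition of the image measure,
\[
\nu\circ(c,f,f')^{-1}(G)=\nu\big(\{t\mid f'(t)\succ_t f(t)\}\big).
\]
The right-hand side vanishes exactly when $f\succeq_P f'$, by the definition of $\succeq_P$, which gives both directions at once.

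The only point needing care is measurability, so that both sides of the identity are defined. First, the map $(c,f,f')\colon T\to\mathcal{P}\times X\times X$ is measurable. Its components $c$, $f$ and $f'$ are measurable, and all spaces involved are separable metrizable ($\mathcal{P}$ is compact metrizable and $X\subseteq\mathbb{R}^l$). Hence the Borel $\sigma$-algebra of the product coincides with the product of the Borel $\sigma$-algebras, and a map into the product is Borel measurable as soon as its coordinates are.

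Second, $G$ is a Borel subset of $\mathcal{P}\times X\times X$. By the definition of the topology on $\mathcal{P}$, the set $\{(\succ,x,y)\mid x\succ y\}$ is open in $\mathcal{P}\times X\times X$. The set $G$ is its image under the homeomorphism that swaps the last two coordinates, so $G$ is open and in particular Borel. Consequently $\{t\mid f'(t)\succ_t f(t)\}$ is $\Sigma$-measurable, and the definition of $\succeq_P$ makes sense.

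With these two facts in place the equivalence follows immediately. I expect no real obstacle. The only subtle step is the identification of the product Borel $\sigma$-algebra, which relies on separability and metrizability of $\mathcal{P}$ and $X$.
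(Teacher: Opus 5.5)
Your proof is correct and is essentially the routine argument the paper omits as ``straightforward'': the preimage of $\{(\succ,x,x')\mid x'\succ x\}$ under $(c,f,f')$ is exactly $\{t\mid f'(t)\succ_t f(t)\}$, so the two conditions coincide by the definition of the image measure. Your measurability checks supply the only details the paper leaves implicit: the set is open by the choice of topology on $\mathcal{P}$, and the product Borel $\sigma$-algebra agrees with the product of the Borel $\sigma$-algebras by second countability.
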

The following is a direct consequence of the saturation property of
superatomless probability spaces.
\begin{proposition}\label{Paretosat} Let $\nu$ be superatomless. Let
$f$ be an individualistic allocation and $\mu$ be any probability
measure on the set $\mathcal{P}\times X\times X'$, with
$\mathcal{P}\times X$-marginal $\nu\circ (c,f)^{-1}$, such that
\[\int \pi_{X'}~\textnormal{d}\mu=\int \pi_X~\textnormal{d}\mu,\]
where $\pi_{X}$ and $\pi_{X'}$ are the projections onto $X$ and $X'$,
respectively. Then there exists an individualistic allocation $f'$
such that $\mu=\nu\circ(c,f,f')^{-1}$.
\end{proposition}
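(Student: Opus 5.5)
We apply the saturation property of superatomless probability spaces with $f_0=(c,f)\colon T\to\mathcal{P}\times X$ in the role of the given function and $Y=X'$. This gives $f'$, and it then remains only to check that $f'$ is an individualistic allocation.

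\emph{Polish spaces.} The space $\mathcal{P}$ is compact metrizable, hence Polish. The sets $X=X'=\mathbb{R}^l_+$ are closed subsets of $\mathbb{R}^l$, hence Polish. So $\mathcal{P}\times X$ and $X'$ are Polish, and $\mu$ is a Borel probability measure on their product.

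\emph{The marginal condition.} The function $(c,f)\colon T\to\mathcal{P}\times X$ is measurable, since $c$ is a coordinate of the measurable map $\mathcal{E}$ and $f$ is measurable. By hypothesis the $\mathcal{P}\times X$-marginal of $\mu$ is $\nu\circ(c,f)^{-1}$.

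\emph{Applying saturation.} Saturation then yields a measurable $f'\colon T\to X'$ such that $\mu$ is the distribution of $t\mapsto\big((c(t),f(t)),f'(t)\big)$. Identifying $(\mathcal{P}\times X)\times X'$ with $\mathcal{P}\times X\times X'$, this says exactly $\mu=\nu\circ(c,f,f')^{-1}$.

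\emph{Feasibility of $f'$.} Since $f'\geq 0$, the change of variables formula for nonnegative functions gives
\[\int f'\,\textnormal{d}\nu=\int\pi_{X'}\,\textnormal{d}\mu=\int\pi_X\,\textnormal{d}\mu=\int f\,\textnormal{d}\nu=\int\eta\,\textnormal{d}\nu.\]
The middle equality is the hypothesis on $\mu$. The second-to-last equality again uses change of variables together with the fact that $\pi_X$ has law $\nu\circ f^{-1}$ under $\mu$. The right-hand side is finite because $\eta$ is integrable, so $f'$ is integrable and feasible, i.e.\ an individualistic allocation.

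The only point needing care is this integrability: the equality of integrals in the hypothesis must be read as an equality of finite vectors. That holds because the $X$-marginal of $\mu$ has finite mean $\int f\,\textnormal{d}\nu=\int\eta\,\textnormal{d}\nu$. Everything else is a direct application of saturation.
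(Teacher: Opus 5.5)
Your proof is correct and matches the paper's approach: the paper gives no separate proof, saying only that the proposition is a direct consequence of the saturation property of superatomless spaces, which is exactly what you apply with $(c,f)$ as the given map and $Y=X'$. Your change-of-variables check that $\int f'\,\textnormal{d}\nu=\int\eta\,\textnormal{d}\nu$ (so $f'$ is feasible) supplies the routine step the paper leaves implicit.
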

Together, the last two propositions show that for a superatomless
space of agents, Pareto efficiency can be fully characterized in terms
of the induced joint distributions; the joint distribution with a Pareto
dominating distributional allocation can be exactly represented by an
individualistic allocation. If $(T,\Sigma,\nu)$ is only assumed to be
atomless, this is in general not possible. Nevertheless, Pareto
efficiency can be defined in distributional terms as the following
theorem shows.

\begin{theorem}\label{approxPareto}Assume that $(T,\Sigma,\nu)$ is
atomless. Let $f$ be an individualistic allocation and $\mu$ a Borel
probability measure on $\mathcal{P}\times E\times X\times X'$, with
$\mathcal{P}\times E\times X$-marginal $\nu\circ(c,\eta,f)^{-1}$, such
that the set
\[\big\{(\succ,e,x,x')\mid x\succ x'\big\}\]
has $\mu$-measure zero and such that the set
\[\big\{(\succ,e,x,x')\mid x'\succ x\big\}\]
has positive $\mu$-measure.  Then there exists an individualistic
allocation $f'$ such that $f'\succ_P f$.
\end{theorem}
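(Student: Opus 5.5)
The plan is to read the conditional distribution of the $X'$-coordinate of $\mu$ as a Young measure on $T$, and then to replace this ``randomized'' reallocation by a genuine individualistic allocation using Lyapunov's theorem for Young measures due to \citet{MR1798830}. The difficulty is that $\nu$ is only atomless, so the saturation property behind Proposition~\ref{Paretosat} is unavailable. We therefore do not try to reproduce $\mu$ exactly. We only aim for an $f'$ that matches the finitely many integrals that matter.

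\textbf{A hypothesis not in the statement.} As written, the statement puts no feasibility restriction on the $X'$-coordinate of $\mu$, and without one no conclusion of the form $f'\succ_P f$ with $f'$ an allocation can follow. I will take the intended meaning to be
\[\int \pi_{X'}\,\textnormal{d}\mu=\int \pi_X\,\textnormal{d}\mu,\]
as in Proposition~\ref{Paretosat}. This equals $\int\eta\,\textnormal{d}\nu$ because $f$ is an allocation. The proof relies on this assumed hypothesis and would not go through without it.

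\textbf{Step 1 (disintegration).} Disintegrate $\mu$ with respect to its $\mathcal{P}\times E\times X$-marginal. This gives a Borel kernel $(\succ,e,x)\mapsto\kappa_{(\succ,e,x)}$ into probability measures on $X'$. All spaces involved are Polish, since $\mathcal{P}$ is compact metrizable. Define the Young measure $\delta_t=\kappa_{(c(t),\eta(t),f(t))}$ on $T$. It is measurable in $t$, and
\[\int\!\!\int x'\,\textnormal{d}\delta_t(x')\,\textnormal{d}\nu(t)=\int\pi_{X'}\,\textnormal{d}\mu,\]
which is finite.

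\textbf{Step 2 (translating the two hypotheses on $\mu$).} For each $t$ let $U_t=\{x'\mid x'\succ_t f(t)\}$ and $W_t=\{x'\mid f(t)\succ_t x'\}$. Both sets are open in $X'$, because preferences have open graph.
\begin{itemize}
\item The condition $\mu(\{x\succ x'\})=0$ gives $\delta_t(W_t)=0$ for $\nu$-a.e.\ $t$. Since $W_t$ is open, $\operatorname{supp}\delta_t\cap W_t=\emptyset$ for a.e.\ $t$.
\item The condition $\mu(\{x'\succ x\})>0$ gives $\int\delta_t(U_t)\,\textnormal{d}\nu>0$.
\end{itemize}
Next, let $h(t,x')=\min\{1,\operatorname{dist}(x',X'\setminus U_t)\}$. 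This function is continuous in $x'$, strictly positive exactly on $U_t$, and jointly measurable. Joint measurability follows because the set $\{(\succ,x,x')\mid x'\succ x\}$ is open in $\mathcal{P}\times X\times X'$, so the distance to its complement within each fiber is Borel. Consequently
\[\int\!\!\int h(t,x')\,\textnormal{d}\delta_t\,\textnormal{d}\nu>0.\]

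\textbf{Step 3 (Lyapunov for Young measures).} Apply the theorem of \citet{MR1798830} to $\delta$ and the finitely many Carathéodory integrands $x'$ (coordinatewise) and $h$. This yields a measurable $f'$ with $f'(t)\in\operatorname{supp}\delta_t$ for a.e.\ $t$ such that
\[\int f'\,\textnormal{d}\nu=\int\!\!\int x'\,\textnormal{d}\delta_t\,\textnormal{d}\nu
\quad\text{and}\quad
\int h(t,f'(t))\,\textnormal{d}\nu=\int\!\!\int h\,\textnormal{d}\delta_t\,\textnormal{d}\nu>0.\]
From these we read off the three required properties:
\begin{itemize}
\item The first equation together with the feasibility hypothesis makes $f'$ an allocation.
\item The support property gives $f'(t)\notin W_t$ for a.e.\ $t$, that is, $f'\succeq_P f$.
\item The second equation forces $f'(t)\in U_t$ on a set of positive measure. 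Hence $f\not\succeq_P f'$, and so $f'\succ_P f$.
\end{itemize}

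\textbf{Main obstacle.} The hard step is Step 3: matching the exact form of the Lyapunov theorem for Young measures to the present setting. One needs the version that simultaneously gives exact equality of integrals for finitely many integrably bounded Carathéodory integrands and the support condition $f'(t)\in\operatorname{supp}\delta_t$. Integrable boundedness of $x'$ is not automatic, since only $\int x'\,\textnormal{d}\delta_t$ is integrable.
\begin{itemize}
\item First attempt: truncate, applying the theorem on $\{x'\mid |x'|\le n\}$ after restricting and renormalizing $\delta_t$. Pass to the limit, or absorb the remaining tail mass by leaving $f'=$ a measurable selection of the tail part on a set chosen by ordinary Lyapunov.
\item Fallback: if only an approximate version of the integral equalities is available, exploit the strict inequality $\int h>0$, which survives small perturbations. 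Then restore exact feasibility by splitting $T$ and using ordinary Lyapunov's theorem on the atomless space.
\end{itemize}
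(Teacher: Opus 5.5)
Your argument is correct and is essentially the paper's proof: disintegrate $\mu$, compose the kernel with $(c,\eta,f)$ to get a Young measure on $T$, and apply Balder's Lyapunov theorem for Young measures to finitely many integrands encoding feasibility, weak improvement and strict improvement. Your use of the support property and of the continuous integrand $h$, where the paper uses indicator integrands, is only a minor variation. You are also right that the feasibility condition $\int\pi_{X'}\,\mathrm{d}\mu=\int\pi_E\,\mathrm{d}\mu$ is missing from the statement, and the paper's proof uses it tacitly when it asserts that $g_2(t,x)=x-e_t$ integrates to $0$. That same condition also removes your ``main obstacle'': it gives $\int\!\!\int x'\,\mathrm{d}\delta_t\,\mathrm{d}\nu<\infty$, which is the same integrability the paper relies on for the equally unbounded integrand $x-e_t$, so no truncation or fallback is needed.
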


It is crucial for the argument that one has information of the joint
distribution of $f$ and $f'$. Indeed, if $\mu$ is a measure on
$\mathcal{P}\times X\times X'$ that puts zero probability on the set
$\big\{(\succ,x,x')\mid x\succ x'\big\}$, and if $f$ and $f'$ are
allocations such that $\nu\circ (c,f)^{-1}$ equals the
$\mathcal{P}\times X$-marginal of $\mu$ and $\nu\circ (c,f')^{-1}$
equals the $\mathcal{P}\times X'$-marginal of $\mu$, then we still do
not know whether $f'\succeq_P f$. However, this is enough to show that
there is another allocation $f''$ such that
$\nu\circ (c,f')^{-1}=\nu\circ (c,f'')^{-1}$ and $f''\succeq_P
f'$. This, in turn, is enough to determine whether $f$ is
$\succeq_P$-maximal, that is, Pareto efficient. Pareto efficiency can
be characterized in distributional terms.\bigskip

However, it would be desirable to make such comparisons without
specifying a possible joint distribution. The following formulation in
the spirit of stochastic dominance due to \citet{MR511487} will allow
us to do just that. Let $\alpha$ and $\alpha'$ be distributional
allocations for the distributional economy $\mu$. We write
$x\succeq y$ for $y\nsucc x$\footnote{For $\succ\in\mathcal{P}^*$, $\succeq$ is a complete and transitive relation.} and let
$B(\succeq,x)=\{y\in Y\mid y\succeq x\}$. We write
$\alpha\succeq_D\alpha'$ if for some (and hence all) regular
conditional distributions $g_\alpha:\mathcal{P}\to\Delta(X)$ and
$g_{\alpha'}:\mathcal{P}\to\Delta(X)$ of $\alpha$ and $\alpha'$,
respectively, and each $x\in X$ one has
\[g_\alpha(\succ)\Big(B(\succeq,x)\Big)\ge
g_{\alpha'}(\succ)\Big(B(\succeq,x)\Big)\] for
$\mu_\mathcal{P}$-almost all $\succ$, where $\mu_\mathcal{P}$ is the
$\mathcal{P}$-marginal of $\mu$. Reif and Wiesmeth have shown that the
usual Pareto ordering of individualistic allocations in an
individualistic economy induces the ordering $\succeq_D$ on the
corresponding distributional allocations and conjectured that the
converse holds for some individualistic representation of these
distributional allocations when preferences are in $\mathcal{P}^*$. We show that
this conjecture is indeed true in the next theorem.

The interpretation of the ordering is as follows: Look at all agents
with preference exactly $\succ$. Then for every $x$, ``at least as
many'' agents with this preference relation get something at least as
good as $x$ under $\mu$ than under $\mu'$. By reordering agents, we
could make sure that every agent is at least as well off under $\mu$
than under $\mu'$.

\begin{theorem}\label{ParetoCoupling}Let $\mu$ be a distributional
economy such that $\mu(\mathcal{P}^*\times E)=1$. Let $\alpha$ and $\alpha'$ be two
distributional allocations for the economy $\mu$ such that
$\alpha\succeq_D\alpha'$. Then there exists a measure $\lambda$ on
$\mathcal{P}\times X\times X'$ such that the
$\mathcal{P}\times X$-marginal of $\lambda$ is $\alpha$, the
$\mathcal{P}\times X'$-marginal of $\lambda$ is $\alpha'$ and
\[\lambda\Big(\big\{(\succeq,x,x')\mid x'\succ x\big\}\Big)=0.\]
\end{theorem}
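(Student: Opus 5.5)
The plan is to prove Theorem~\ref{ParetoCoupling} fibrewise, using Strassen's theorem on stochastic orderings in the preorder version of \citet{MR1711599}, and then to glue the fibrewise couplings together measurably over $\mathcal{P}$. Let $\mu_\mathcal{P}$ be the common $\mathcal{P}$-marginal of $\alpha$ and $\alpha'$. Fix regular conditional distributions $g_\alpha$ and $g_{\alpha'}$.

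For a fixed $\succ\in\mathcal{P}^*$, the relation $\succeq$ (defined by $x\succeq y$ iff $y\nsucc x$) is a complete preorder on $X$. Its graph $\{(x,y)\mid y\nsucc x\}$ is closed, because $\succ$ has open graph. For a complete preorder, the sets $B(\succeq,x)$ together with the ``strict'' upper sets $\{y\mid y\succ x\}$ exhaust all closed and open upper sets. The latter are countable unions of sets of the former type, or limits of them along a sequence approaching $x$ from above. So the dominance hypothesis, stated on the sets $B(\succeq,x)$, will be upgraded to
\[g_\alpha(\succ)(U)\ge g_{\alpha'}(\succ)(U)\]
for every closed upper set $U$.

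Next I handle the ``for each $x$, almost all $\succ$'' quantifier order. The exceptional null set depends on $x$, so I restrict to a countable dense $D\subseteq X$. The union of the exceptional sets over $D$ is null, and continuity of $\succ$ lets me pass from $x\in D$ to all $x$. This works because $B(\succeq,x)$ is the decreasing limit of $B(\succeq,x_n)$ for suitable $x_n\in D$ with $x_n\succ x$ close to $x$, or is approximated by monotonicity-free arguments using openness of $\succ$. With the inequality on all closed upper sets in hand, Strassen--Kamae--Krengel--O'Brien in the form of \citet{MR1711599} (closed preorder on a Polish space) gives, for $\mu_\mathcal{P}$-almost every $\succ$, a coupling $\kappa_\succ$ on $X\times X'$. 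Its marginals are $g_\alpha(\succ)$ and $g_{\alpha'}(\succ)$, and it is concentrated on the closed set $G_\succ=\{(x,x')\mid x\succeq x'\}$, i.e.\ $x'\nsucc x$.

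The main obstacle is choosing $\kappa_\succ$ measurably in $\succ$. I will use a measurable selection theorem. Define the correspondence $\Phi(\succ)$ as the set of probability measures $\kappa$ on $X\times X'$ with the prescribed marginals and $\kappa(G_\succ)=1$. Its graph is Borel in $\mathcal{P}\times\Delta(X\times X')$:
\begin{itemize}
\item the marginal maps are continuous, and $g_\alpha,g_{\alpha'}$ are Borel;
\item $(\succ,\kappa)\mapsto\kappa(\{(x,x')\mid x'\succ x\})$ is lower semicontinuous, since the set $\{(\succ,x,x')\mid x'\succ x\}$ is open in the topology of $\mathcal{P}$.
\end{itemize}
The values of $\Phi$ are nonempty and compact by tightness of the marginals and closedness of $G_\succ$. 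A Kuratowski--Ryll-Nardzewski or von Neumann--Aumann selection then yields a $\mu_\mathcal{P}$-measurable $\succ\mapsto\kappa_\succ$, which I modify on a null set to be Borel. Finally, I define
\[\lambda(A\times B\times B')=\int_A\kappa_\succ(B\times B')\,\textnormal{d}\mu_\mathcal{P}(\succ).\]
Its $\mathcal{P}\times X$- and $\mathcal{P}\times X'$-marginals are $\alpha$ and $\alpha'$ by disintegration. The set $\{x'\succ x\}$ is $\lambda$-null, because each fibre is $\kappa_\succ$-null for almost every $\succ$, where $\mu(\mathcal{P}^*\times E)=1$ guarantees that $\succeq$ is a preorder almost surely.
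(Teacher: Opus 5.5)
Your proposal is correct in outline but reaches the result by a different route from the paper, and one step needs a small fix.

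\textbf{Comparison with the paper.} The paper never disintegrates and glues. It puts one preorder on $\mathcal{P}\times X$: $(\succ,x)\succcurlyeq(\succ',x')$ iff $\succ=\succ'$ and $x\succeq x'$. Its graph is closed because the diagonal of $\mathcal{P}$ is closed and $\{(\succ,x,x')\mid x'\nsucc x\}$ is closed. The paper checks $\int f\,\mathrm{d}\alpha\ge\int f\,\mathrm{d}\alpha'$ for bounded $\succcurlyeq$-increasing $f$ by approximating each section $f_\succ$ with positive combinations of indicators of sets $B(\succeq,x)$ (Lemma \ref{monapprox}) and integrating against the conditional distributions. It then applies the preorder Strassen theorem once, on $\mathcal{P}\times X$. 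A coupling concentrated on that graph automatically has equal $\mathcal{P}$-coordinates, so projecting gives $\lambda$.

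\textbf{What each route buys.} The ``diagonal on $\mathcal{P}$'' trick absorbs exactly the measurability-in-$\succ$ problem that you handle with the von Neumann--Aumann selection, the Borel-graph check, and the lower semicontinuity of $(\succ,\kappa)\mapsto\kappa(\{x'\succ x\})$. That is why the paper's argument is much shorter. Your route has three advantages:
\begin{enumerate}
\item Strassen is needed only for a closed complete preorder on $\mathbb{R}^l_+$, where it essentially reduces to one-dimensional stochastic dominance via a continuous utility.
\item The restriction to $\mathcal{P}^*$, where $\succeq$ is actually transitive, is built in. The paper's relation on $\mathcal{P}\times X$ is a preorder only on $\mathcal{P}^*\times X$.
\item You make explicit the passage from ``for each $x$, for almost all $\succ$'' to ``for almost all $\succ$, for all $x$''. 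The paper's proof needs this too, since the points supplied by Lemma \ref{monapprox} depend on $\succ$, but it does not spell it out.
\end{enumerate}

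\textbf{The step to fix.} Your justification of that passage runs in the wrong direction. If $x_n\succ x$, then every $B(\succeq,x_n)$ misses the whole indifference class of $x$. So no limit of these sets equals $B(\succeq,x)$; approximating from above only yields the strict upper set $\{y\mid y\succ x\}$.

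Approximate from below instead:
\begin{itemize}
\item If $x$ is $\succeq$-minimal, then $B(\succeq,x)=X$ and there is nothing to prove.
\item Otherwise, take any $y$ with $x\succ y$. The relatively open sets $\{w\in[y,x]\mid x\succ w\}$ and $\{w\in[y,x]\mid w\succ y\}$ cover the segment $[y,x]$ by transitivity of $\succeq$. By connectedness they intersect. So the open set $\{w\mid x\succ w\succ y\}$ is nonempty and meets $D$.
\item Hence $B(\succeq,x)=\bigcap\{B(\succeq,d)\mid d\in D,\ x\succ d\}$. This is a countable chain, so the inequality passes to the limit by continuity from above.
\end{itemize}

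Convexity (connectedness) of $X$ is genuinely used here. On a disconnected space the extension from $D$ can fail. With this correction your proof goes through.
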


\section{The Core}\label{core}

We next go to the slightly more involved problem of analyzing blocking
coalitions from a distributional point of view. Consider an
individualistic economy with space of traders $(T,\Sigma,\nu)$ and
preferences and endowments given by
$\mathcal{E}:T\to\mathcal{P}\times E$. An \emph{individualistic
coalition} is simply a measurable set $C\in\Sigma$ such that
$\nu(C)>0$. We say that the individualistic coalition $C$
\emph{blocks} the individualistic allocation $f$ if there is an
individualistic allocation $f'$ such that $f'(t)\succ_t f(t)$ for
$\nu$-almost all $t\in C$ and
\[\int_C f'~\textnormal{d}\nu=\int_C \eta~\textnormal{d}\nu.\]
In that case, we say that $C$ \emph{blocks} $f$ \emph{by} $f'$.  We
can still work with distributions, but have to look at the
distribution of the measure when restricted to $C$. The distribution
of an individualistic allocation $f$ restricted to an individualistic
coalition $C$ assigns to each Borel subset $B$ of
$\mathcal{P}\times E\times X$ the measure
$\nu\big((c,\eta,f)^{-1}(B)\cap C\big)$. This restricted distribution is
not a probability measure unless $\nu(C)=1$. The following is, again,
straightforward.
\begin{proposition}\label{coresimple}Let $f$ and $f'$ be
individualistic allocations and $C$ an individualistic
coalition. Let \[\rho(A)=\nu\big((c,\eta,f,f')^{-1}(A)\cap C\big).\]
Then $C$ blocks $f$ by $f'$ if and only if the set
\[\big\{(\succ,e,x,x')\mid x'\nsucc x\big\}\]
has $\rho$-measure zero and
\[\int \pi_{X'}~\textnormal{d}\rho=\int\pi_E~\textnormal{d}\rho.\]
\end{proposition}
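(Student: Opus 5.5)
The plan is to translate both conditions in the definition of blocking into statements about $\rho$ using the change-of-variables formula for image measures. Write $\nu_C$ for the measure $A\mapsto\nu(A\cap C)$ on $(T,\Sigma)$, and $\Phi=(c,\eta,f,f')$. Then $\rho=\nu_C\circ\Phi^{-1}$. For every Borel $B\subseteq\mathcal{P}\times E\times X\times X'$ this gives $\rho(B)=\nu(\Phi^{-1}(B)\cap C)$. For every nonnegative or $\rho$-integrable Borel function $h$ it gives $\int h\,\textnormal{d}\rho=\int_C h\circ\Phi\,\textnormal{d}\nu$.

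\emph{Preference condition.} Let
\[N=\big\{(\succ,e,x,x')\mid x'\nsucc x\big\}.\]
First I check that $N$ is Borel. Its complement $\{(\succ,e,x,x')\mid x'\succ x\}$ is the preimage of the open set $\{(\succ,x,y)\mid x\succ y\}\subseteq\mathcal{P}\times X\times X$ under the continuous map $(\succ,e,x,x')\mapsto(\succ,x',x)$. That open set is open by the definition of the topology on $\mathcal{P}$. Hence the complement is open and $N$ is closed, in particular Borel.

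Next, $\Phi(t)\in N$ if and only if $f'(t)\nsucc_t f(t)$. So
\[\rho(N)=\nu\big(\{t\in C\mid f'(t)\nsucc_t f(t)\}\big).\]
This vanishes exactly when $f'(t)\succ_t f(t)$ for $\nu$-almost all $t\in C$.

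\emph{Feasibility condition.} I apply the integral formula coordinatewise to $h=\pi_{X'}$ and $h=\pi_E$. These are nonnegative Borel maps into $\mathbb{R}^l_+$. Their compositions with $\Phi$ are $f'$ and $\eta$, respectively. This gives
\[\int\pi_{X'}\,\textnormal{d}\rho=\int_C f'\,\textnormal{d}\nu\quad\text{and}\quad\int\pi_E\,\textnormal{d}\rho=\int_C\eta\,\textnormal{d}\nu.\]
Both integrals are finite, because $f'$ and $\eta$ are $\nu$-integrable: $\eta$ by assumption, and $f'$ since it is a nonnegative allocation whose integral equals $\int\eta\,\textnormal{d}\nu$. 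The equality of the two sides is therefore the same statement in either formulation.

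Combining the two equivalences gives the proposition. The definition of blocking also requires $\nu(C)>0$, but this is part of $C$ being an individualistic coalition and so is common to both sides. The only step needing any care is the measurability of $N$, and this follows directly from the choice of topology on $\mathcal{P}$. Everything else is the change-of-variables formula.
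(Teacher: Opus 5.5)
Your proposal is correct, and it is essentially the verification the paper has in mind. The paper gives no separate proof and simply calls the result straightforward; your argument uses closedness of $\{x'\nsucc x\}$, which comes from the definition of the topology on $\mathcal{P}$, together with the change-of-variables formula for $\rho=\nu_C\circ(c,\eta,f,f')^{-1}$ applied coordinatewise, mirroring the computation in the paper's proof of Theorem~\ref{coalitionexist}.
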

Much like distributional allocations, we can exactly represent
``distributional blocking coalitions'' when $(T,\Sigma,\nu)$ is
superatomless.

\begin{theorem}\label{coalitionexist}Assume that $(T,\Sigma,\nu)$ is
superatomless. Let $f$ be an individualistic allocation and $\mu$ a
Borel probability measure on $\mathcal{P}\times E\times X\times X'$
such that the set
\[\big\{(\succ,e,x,x')\mid x'\nsucc x\big\}\]
has $\mu$-measure zero, such that the
$\mathcal{P}\times E\times X$-marginal of $\mu$ is setwise smaller
than $\nu\circ(c,\eta,f)^{-1}$, and such that
\[\int \pi_{X'}~\textnormal{d}\mu=\int\pi_E~\textnormal{d}\mu.\]
Then there exists an individualistic coalition $C$ and an
individualistic allocation $f'$ such that $C$ blocks $f$ by $f'$ and
for every Borel subsets $A$ of $\mathcal{P}\times E\times X\times X'$
one has $\mu(A)=\nu\Big((c,\eta,f,f')^{-1}(A)\cap C\Big)$.
\end{theorem}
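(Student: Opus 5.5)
The proof combines Theorem \ref{gencoalition} with the saturation property of superatomless spaces. Write $\phi=(c,\eta,f):T\to\mathcal{P}\times E\times X$ and let $\mu_0$ be the $\mathcal{P}\times E\times X$-marginal of $\mu$. By hypothesis, $\mu_0(A)\le\nu\circ\phi^{-1}(A)$ for all Borel $A$.

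\emph{Step 1: the coalition.} The space $\mathcal{P}\times E\times X$ is compact metrizable times Euclidean, hence Polish, so its Borel $\sigma$-algebra is countably generated. Therefore $\sigma(\phi)$ is a countably generated sub-$\sigma$-algebra of $\Sigma$. Proposition \ref{superrelatomless} then shows that $\nu$ is relatively atomless over $\sigma(\phi)$. Theorem \ref{gencoalition} gives a set $C\in\Sigma$ with $\mu_0(A)=\nu(\phi^{-1}(A)\cap C)$ for all Borel $A$. Applying the blocking condition to the whole space shows that $\mu$ has positive total mass, so $\nu(C)=\mu(\mathcal{P}\times E\times X\times X')>0$. (If $\mu$ were the zero measure the statement would be vacuous or false; I read it as implicitly assuming $\mu\neq 0$.)

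\emph{Step 2: the blocking allocation on $C$.} I restrict attention to $C$ with the normalized measure $\nu_C=\nu(\cdot\cap C)/\nu(C)$. The restriction of a superatomless space to a set of positive measure is again superatomless, since for $A\subseteq C$ the space $L_1(A,\nu_C)$ is just $L_1(A,\nu)$ with an equivalent norm. The normalized measure $\mu/\nu(C)$ on $(\mathcal{P}\times E\times X)\times X'$ has first marginal equal to the distribution of $\phi|_C$ under $\nu_C$. Since $X'$ is Polish, the saturation property yields a measurable $g:C\to X'$ with $\nu_C\circ(\phi,g)^{-1}=\mu/\nu(C)$, that is, $\mu(A)=\nu((\phi,g)^{-1}(A)\cap C)$ for all Borel $A$ in the four-fold product. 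I expect this transfer to be the main technical point. Its two ingredients are the restriction-stability of superatomlessness and the application of saturation with $\phi|_C$ as the given function; both should be routine.

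\emph{Step 3: extend to an allocation and verify blocking.} Define $f'=g$ on $C$ and $f'=h$ off $C$, where $h$ is chosen so that $f'$ is feasible. The simplest choice is $h=\eta$ on $T\setminus C$.
\begin{itemize}
\item Integrating $\pi_{X'}$ and $\pi_E$ against $\mu$ gives $\int_C g\,d\nu=\int\pi_{X'}\,d\mu=\int\pi_E\,d\mu=\int_C\eta\,d\nu$.
\item Hence $\int f'\,d\nu=\int_C\eta\,d\nu+\int_{T\setminus C}\eta\,d\nu=\int\eta\,d\nu$, so $f'$ is an individualistic allocation. It is integrable because $g$ is nonnegative with finite integral.
\item The set $\{x'\nsucc x\}$ is $\mu$-null, so $f'(t)\succ_t f(t)$ for $\nu$-almost all $t\in C$. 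By Proposition \ref{coresimple}, $C$ blocks $f$ by $f'$.
\item The representation $\mu(A)=\nu((c,\eta,f,f')^{-1}(A)\cap C)$ holds because $f'=g$ on $C$.
\end{itemize}
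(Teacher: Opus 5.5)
Your proof is correct. It uses the same two tools as the paper, Maharam's lemma (via relative atomlessness over a countably generated $\sigma$-algebra) and the Hoover--Keisler saturation property, but applies them in the opposite order.

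\textbf{The paper's order.} It first forms the measure $\lambda$ on $T$ with density $h\circ(c,\eta,f)$ and uses Lemma \ref{saabscont} to see that $\lambda$ is superatomless. It then applies saturation to $\lambda$ to obtain $g:T\to X'$ on all of $T$. Only after that does it choose $C$, by Maharam's lemma relative to the larger $\sigma$-algebra $\sigma(c,\eta,f,g)$. This makes $\lambda$ and $\nu(\cdot\cap C)$ agree on events involving $g$ as well.

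\textbf{Your order.} You choose $C$ first, relative to $\sigma(c,\eta,f)$ only, which is exactly Theorem \ref{gencoalition}. You then construct $g$ by saturation on the normalized restriction $(C,\nu_C)$.

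\textbf{What each route buys.} Your route needs only the easy fact that superatomlessness survives restriction to a set of positive measure; this is the special case of Lemma \ref{saabscont} with density $1_C$. It also applies saturation to a genuine probability space, whereas the paper implicitly renormalizes $\lambda$. The paper's route produces $g$ on all of $T$ and gets the joint representation from a single application of Maharam's lemma.

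\textbf{One small quibble.} The positivity $\nu(C)>0$ does not follow from the blocking condition. Under the literal hypothesis, $\mu$ is a probability measure, so its marginal must equal $\nu\circ(c,\eta,f)^{-1}$ and in fact $\nu(C)=1$. Under the intended reading, with $\mu$ a finite measure, $\mu\neq 0$ is an assumption, as your parenthetical already acknowledges.
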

 
This result also shows that for $\nu$ superatomless, every coalition
that might exist in the distribution actually exists as a proper individualistic
coalition. If $\nu$ is only assumed to be atomless, the existence of
such a $\mu$ still guarantees that some coalition can block the
allocation $f$.

\begin{theorem}\label{approxcoalition}Assume that $(T,\Sigma,\nu)$ is
atomless. Let $f$ be an individualistic allocation and $\mu$ a Borel
probability measure on $\mathcal{P}\times E\times X\times X'$ such
that the set
\[\big\{(\succ,e,x,x')\mid x'\nsucc x\big\}\]
has $\mu$-measure zero and such that the
$\mathcal{P}\times E\times X$-marginal of $\mu$ is setwise smaller
than $\nu\circ(c,\eta,f)^{-1}$ and such that
\[\int \pi_{X'}~\textnormal{d}\mu=\int\pi_E~\textnormal{d}\mu.\]
Then there exists an individualistic coalition $C$ and an
individualistic allocation $f'$ such that $C$ blocks $f$ by $f'$.
\end{theorem}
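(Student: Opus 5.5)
The plan is to reduce to the superatomless case, Theorem~\ref{coalitionexist}, using Lyapunov's theorem for Young measures, as announced in the introduction. First we represent the distributional blocking coalition $\mu$ through a Young measure on $T$. Let $\rho=\nu\circ(c,\eta,f)^{-1}$ and let $\mu_1$ be the $\mathcal{P}\times E\times X$-marginal of $\mu$. Since $\mu_1\le\rho$ setwise, the Radon--Nikodym theorem gives a Borel $h:\mathcal{P}\times E\times X\to[0,1]$ with $\mu_1=h\,\rho$. Disintegrating $\mu$ over $\mathcal{P}\times E\times X$ gives a kernel $k:(\succ,e,x)\mapsto k(\succ,e,x)\in\Delta(X')$ with $\mu=\int \delta_{(\succ,e,x)}\otimes k(\succ,e,x)\,h\,d\rho$. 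Pulling back along $(c,\eta,f)$, we set $\beta(t)=h(c(t),\eta(t),f(t))$ and $\tau_t=k(c(t),\eta(t),f(t))$. The generalized coalition $\beta$ together with the Young measure $\tau$ satisfies:
\begin{itemize}
\item[(i)] $\tau_t(\{x'\mid x'\succ_t f(t)\})=1$ for $\nu$-a.e.\ $t$ with $\beta(t)>0$, because the set $\{x'\nsucc x\}$ is $\mu$-null;
\item[(ii)] $\int\beta(t)\int x'\,d\tau_t(x')\,d\nu=\int\beta\,\eta\,d\nu$, which is the integral condition.
\end{itemize}
Integrability of $\pi_{X'}$ under $\mu$ follows from that condition and the integrability of $\eta$.

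Next we pass from the fuzzy coalition with random bundles to a genuine set with a genuine function. Consider on $T$ the Young measure $t\mapsto \beta(t)\,\delta_{(1,\cdot)}\otimes\tau_t+(1-\beta(t))\,\delta_{0}$, with values in the probability measures on $\{0,1\}\times X'$. The integrand is $(i,x')\mapsto (i,\,i\,x'-i\,\eta(t))$, and we also carry the coordinate $i$ so that the coalition has positive measure. Lyapunov's theorem for Young measures on an atomless space (\citet{MR1798830}) then yields a measurable selection $t\mapsto(\iota(t),g(t))$ with $(\iota(t),g(t))$ in the support of the Young measure at $t$ and with the same integrals:
\[
\int\iota\,d\nu=\int\beta\,d\nu=\mu(\text{total})=1>0,\qquad
\int\iota\,(g-\eta)\,d\nu=0 .
\]
Put $C=\{\iota=1\}$. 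Then $\nu(C)>0$, and on $C$ we have $g(t)\in\operatorname{supp}\tau_t$ with $\beta(t)>0$. To make sure the selected bundles lie in the strictly preferred set itself, and not merely in the support, the selection version should be applied to the integrable, measurable-graph correspondence $t\mapsto\{x'\mid x'\succ_t f(t)\}$. This set is open in $X'$, but it may fail to be closed, so one uses the form of the theorem that selects from the distribution's actual values: the measurable-selection/purification form, in which $g(t)$ is taken to be a.e.\ in a set of full $\tau_t$-measure. That gives $g(t)\succ_t f(t)$ for a.e.\ $t\in C$, and also $\int_C g\,d\nu=\int_C\eta\,d\nu$.

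Finally we define $f'=g$ on $C$ and $f'=\eta$ off $C$. This is measurable with values in $X$, and
\[
\int f'\,d\nu=\int_C\eta\,d\nu+\int_{T\setminus C}\eta\,d\nu=\int\eta\,d\nu ,
\]
so $f'$ is an individualistic allocation. By the previous step, $C$ blocks $f$ by $f'$. The main obstacle is the second step: invoking the correct version of the Young-measure Lyapunov theorem. It has to handle a variable integrand (depending on $\eta(t)$), unbounded but integrable values, and, crucially, preserve the pointwise constraint $x'\succ_t f(t)$, a relatively open rather than closed condition. My first attempt would be to choose a Borel set $G_t$ of full $\tau_t$-measure inside the preferred set and apply the theorem in its ``pure strategy with values a.e.\ in the support of the mixture's concentrating set'' form. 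If only the closed-support version is available, I would instead truncate to compact subsets of the open preferred sets carrying $\tau_t$-mass arbitrarily close to one, and absorb the small error in $X$ by using strict monotonicity-free rescaling within $C$. This fallback is delicate, so the exact-selection form is the intended route.
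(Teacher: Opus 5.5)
Your construction is the paper's: pull back the disintegration of $\mu$ weighted by the Radon--Nikodym derivative $h$, add a dummy outcome for agents outside the coalition (your $\{0,1\}$ coordinate plays the role of the paper's isolated extra point), apply Lyapunov's theorem for Young measures, let $C$ be the set where the non-dummy outcome is selected, and set $f'=\eta$ off $C$. Despite your opening sentence, no reduction to Theorem~\ref{coalitionexist} occurs, and none is needed.

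The gap is the step you yourself flag. Write $U_t=\{x'\in X'\mid x'\succ_t f(t)\}$. The support property only places $g(t)$ in $\operatorname{supp}\tau_t\subseteq\overline{U_t}$, not in $U_t$, so it does not give strict preference. The ``exact-selection form'' you then invoke is asserted without justification. Your fallback does not work: Theorem~\ref{approxcoalition} assumes no monotonicity, and truncating $\tau_t$ to compact subsets moves its barycenter. That destroys the exact identity $\int_C f'\,d\nu=\int_C\eta\,d\nu$, and nothing is available to repair it.

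The missing idea is to turn the pointwise constraint into one more integral constraint. Add the integrand $G\big(t,(i,x')\big)=i\cdot 1_{X'\setminus U_t}(x')$. It has three properties:
\begin{enumerate}
\item It is jointly measurable. It is the preimage of $\{(\succ,x,x')\mid x'\nsucc x\}$ under $(t,x')\mapsto(c(t),f(t),x')$, and that set is closed by the choice of topology on $\mathcal{P}$.
\item It is nonnegative and bounded.
\item By your (i), its integral against your Young measure is $\int\beta(t)\,\tau_t(X'\setminus U_t)\,d\nu(t)=\mu\big(\{(\succ,e,x,x')\mid x'\nsucc x\}\big)=0$.
\end{enumerate}
Lyapunov's theorem for Young measures preserves the integrals of finitely many integrable, jointly measurable integrands; this is how the paper uses it. The pure selection therefore satisfies $\int_C 1_{X'\setminus U_t}\big(g(t)\big)\,d\nu(t)=0$, and nonnegativity gives $g(t)\succ_t f(t)$ for $\nu$-almost all $t\in C$.

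This is exactly the paper's third integrand $g_2$. It also shows that the ``exact-selection form'' you wanted is a one-line corollary of the standard theorem. With it, the rest of your argument goes through unchanged, and the support property of the selection is never needed.
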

 
Just as in the case of Pareto optimality, it is possible to compare
coalitions in distributional terms without an {\it a priori}
specification of a joint distribution. For coalitional blocking, we
need to make somewhat stronger assumptions. We do not just want to
have $f'(t)\succeq_t f(t)$ for almost all $t\in C$, we want
$f'(t)\succ_t f(t)$ for almost all $t\in C$. To do so, we take
preferences to be in $\mathcal{P}_{\textnormal{mo}}^*$. Then, if we
can show that $f'(t)\succeq_t f(t)$ for almost all $t\in C$ but not
$f(t)\succeq_t f'(t)$ for almost all $t\in C$, then there must be a
positive measure set $E\subseteq C$ such that $f'(t)\succ_t f(t)$ for
almost all $t\in E$ and, by a standard argument using strict
monotonicity, one can redistribute then between the members of $C$ so
that almost everyone is better off than under $f$. The other new
complication comes from the fact that we are not just considering the
grand coalition, we have to keep track of the total endowment of a
coalition. To make the argument more transparent, we look, restricted
to a distributional counterpart of a coalition, at the original
allocation with values in $X$, an allocation that makes some agents
better off and almost nobody worse off with values in $X'$, and an
allocation with values in $X''$ that makes almost everyone in the
coalition strictly better off.

\begin{theorem}\label{Corecoupling}Suppose that $\mu$ and $\mu'$ are
Borel probability measures (we just renormalize coalitions) on
$\mathcal{P}_{\textnormal{mo}}^*\times E\times X$, with the same
$\mathcal{P}_{\textnormal{mo}}^*\times E$-marginal, such that
\[\int \pi_{X'}~\textnormal{d}\mu'=\int
\pi_{E}~\textnormal{d}\mu',\] and such that for regular conditional
probabilities
\[d:\mathcal{P}_{\textnormal{mo}}^*\times E\to\Delta(X)\] and
\[d':\mathcal{P}_{\textnormal{mo}}^*\times E\to\Delta(X')\] of $\mu$
and $\mu'$, respectively, we have
\[d(\succ,e)\Big(B(\succeq,x)\Big)\ge
d'(\succ,e)\Big(B(\succeq,x)\Big)\] for almost all $(\succ,e)$ with
respect to the corresponding marginal of $\mu$ but not vice versa.
Then there exists a measure $\lambda$ on
$\mathcal{P}_{\textnormal{mo}}^*\times E\times X\times X'\times X''$
such that
\[\lambda\Big(\big\{(\succ,e,x,x',x'')\mid x\succ x'\textnormal{ or
}x'\succeq x''\big\}\Big)=0,\] such that the
$\mathcal{P}_{\textnormal{mo}}^*\times E\times X$-marginal of
$\lambda$ coincides with $\mu$, such that the
$\mathcal{P}_{\textnormal{mo}}^*\times E\times X'$-marginal of
$\lambda$ coincides with $\mu'$ and such that
\[\int \pi_{X''}~\textnormal{d}\lambda=\int
\pi_{E}~\textnormal{d}\lambda.\]
\end{theorem}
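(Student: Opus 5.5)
The plan is to build $\lambda$ in two stages: first couple $\mu$ and $\mu'$ over the common $\mathcal{P}_{\textnormal{mo}}^*\times E$-marginal so that the pair $(x,x')$ satisfies the first half of the null-set condition, and then define $x''$ as a measurable perturbation of $(x,x')$ that achieves both the strict comparison and the feasibility condition. Throughout, all constructions are carried out fibrewise over $(\succ,e)$ using the regular conditional probabilities $d$ and $d'$, and the pieces are glued back together by integrating against the common marginal.

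\textbf{Step 1 (coupling).} For each $(\succ,e)$ with $\succ\in\mathcal{P}^*_{\textnormal{mo}}$, the relation $\succeq$ is a complete preorder on $X$ with closed upper sets $B(\succeq,x)$. The hypothesis compares $d(\succ,e)$ and $d'(\succ,e)$ on all such upper sets. By the version of Strassen's theorem for closed preorders (the one of \citet{MR1711599} already used for Theorem \ref{ParetoCoupling}), each fibre then admits a coupling $\kappa_{(\succ,e)}$ of $d(\succ,e)$ and $d'(\succ,e)$ that is concentrated on the appropriate closed comparison set. Specifically, the coupling is oriented so that $\{x\succ x'\}$ is $\kappa_{(\succ,e)}$-null, which is the half of the conclusion involving only $x$ and $x'$. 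I then have to check that $(\succ,e)\mapsto\kappa_{(\succ,e)}$ can be chosen measurably. I would do this as in the proof of Theorem \ref{ParetoCoupling}: the set of admissible couplings is a nonempty compact set depending measurably on $(\succ,e)$, so a measurable selection theorem applies. Integrating against the common marginal gives a measure $\lambda_0$ on $\mathcal{P}_{\textnormal{mo}}^*\times E\times X\times X'$ with the right marginals $\mu$ and $\mu'$.

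The ``not vice versa'' clause is then used as follows. If the complementary strict comparison were $\lambda_0$-null, then integrating the indicator of $B(\succeq,x)$ under $\lambda_0$ would give the reverse inequality between $d'$ and $d$ almost everywhere. This contradicts the hypothesis, so there is a set $S$ of positive $\lambda_0$-measure on which the two bundles are strictly ranked.

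\textbf{Step 2 (redistribution).} On $S$, the preference graph is open, so for each point there is a largest $\delta\in(0,1]$ (truncated) such that shaving $\delta\mathbf 1$ off the strictly better bundle keeps it strictly better. This $\delta$ is a measurable function of the point because $\mathcal{P}$ carries the topology making the graph open. I then choose $\delta_0>0$ with $S_0=\{\delta\ge\delta_0\}$ of positive measure, remove $\delta_0\mathbf 1$ on $S_0$, and distribute the collected amount $w=\delta_0\lambda_0(S_0)\mathbf 1$ uniformly over the whole population. Strict monotonicity turns the weak comparisons off $S_0$ into strict ones. The resulting deterministic function gives $x''$, and I set $\lambda$ to be the image of $\lambda_0$ under $(\succ,e,x,x')\mapsto(\succ,e,x,x',x'')$. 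The integral identity $\int\pi_{X''}\,\mathrm d\lambda=\int\pi_E\,\mathrm d\lambda$ follows from $\int\pi_{X'}\,\mathrm d\mu'=\int\pi_E\,\mathrm d\mu'$ together with conservation of the total amount moved.

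\textbf{Main obstacle.} The delicate point is reconciling the orientation of the comparisons. The hypothesis's dominance direction, the ``not vice versa'' strictness, and the two clauses $x\succ x'$ and $x'\succeq x''$ in the null set must all be made to fit together. In particular, $x''$ must be strictly above the appropriate bundle almost everywhere, and not merely on $S$, while the total stays fixed. I would first try to verify carefully which bundle (the $X$-coordinate or the $X'$-coordinate) serves as the base for the redistribution so that every clause of the stated null set is satisfied. Checking the null-set condition pointwise under the orientation delivered by Step 1 is where I expect the real work, and possible adjustment of the construction, to lie.
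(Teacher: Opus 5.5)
There is a genuine gap, and it is exactly the ``main obstacle'' you leave open: the orientations cannot be reconciled for the statement as literally written, and Step~1 claims an orientation the hypothesis does not deliver. Strassen's theorem gives you no choice of direction. From $d(\succ,e)(B(\succeq,x))\ge d'(\succ,e)(B(\succeq,x))$ it yields a coupling concentrated on $\{x\succeq x'\}$, so $\{x'\succ x\}$ is null, not $\{x\succ x'\}$. In fact the literal statement fails. Take one type $(\succ_0,\bar e)$ with $\succ_0$ represented by $p\cdot y$, $p\gg 0$, and set $\mu=\delta_{(\succ_0,\bar e,\bar e+\mathbf 1)}$ and $\mu'=\delta_{(\succ_0,\bar e,\bar e)}$. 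All hypotheses hold, but $x\succ x'$ almost surely under any coupling. The second clause is also off. Requiring $\{x'\succeq x''\}$ to be null while $\int\pi_{X''}\,\mathrm d\lambda=\int\pi_E\,\mathrm d\lambda=\int\pi_{X'}\,\mathrm d\mu'$ is impossible in general: take the same linear type with $\bar e\neq 0$, $x=0$, $x'=\bar e$, and integrate $p\cdot x''>p\cdot\bar e$. Your own redistribution violates this clause on $S_0$, where $x''\le x'$.

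What the paper's proof actually establishes is a corrected version. Its coupling $\tau$ satisfies $x'\succeq x$, and it takes $x''=x'$ when $\beta=0$. So $d'$ dominates $d$, and the null set is $\{x\succ x'\text{ or }x\succeq x''\}$. You needed to spot and make this correction. Once you do, your plan coincides with the paper's, and the base bundle is $x$. On $S_0$ the shaved bundle is still $\succ x$. Off $S_0$ one has $x''>x'\succeq x$, so $x''\succ x$ by strict monotonicity and negative transitivity.

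Step~2 also breaks at the boundary of $X=\mathbb{R}^l_+$. On $S$ the bundle $x'$ may have zero coordinates. Then no $\delta>0$ keeps $x'-\delta\mathbf 1$ in $X$, and $\{\delta\ge\delta_0\}$ can be empty for every $\delta_0>0$. Moreover, a ``largest'' admissible $\delta$ need not exist, since the admissible set is relatively open. Shave a single commodity instead, as the paper does. The clean justification is that $x'\neq 0$ on $\{x'\succ x\}$. Otherwise $0\succ x$, which contradicts irreflexivity if $x=0$, and asymmetry if $x>0$ because then $x\succ 0$. Openness of the graph and countable additivity then give $i$ and $n$ with $\tau(\{x'-\tfrac1n e_i\succ x\})>0$. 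The $\tfrac1n\epsilon_n$ units of good $i$ collected there go to the complementary set.

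Finally, the proof of Theorem~\ref{ParetoCoupling} uses no measurable selection. It applies the preorder form of Strassen's theorem once, globally, to the closed preorder on $\mathcal{P}\times E\times X$ that compares bundles only within the same $(\succ,e)$. Lemma~\ref{monapprox} supplies the inequality for monotone test functions. This avoids having to prove measurability of $(\succ,e)\mapsto\kappa_{(\succ,e)}$, which your fibrewise route still owes.
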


\section{Core-Equivalence}\label{coequ}

So far, we have shown that one can meaningfully define the relevant
concepts of Pareto efficiency and the core in distributional terms. It
is less clear that the distributional notions are operational, that one
can actually use them without specifying an ancillary probability space of
individual agents. We show that these notions are workable by using
them to prove a fairly general core-equivalence theorem. For this, we
need to introduce some new economic terms.

A reallocation $\tau$ for the distributional economy $\mu$ is
\emph{Walrasian} if there exists $p\in\mathbb{R}^l$ such that
\[\tau\Big(\big\{(\succ,e,x)\mid p\cdot x\leq p\cdot e\textnormal{ and
}x'\succ x\textnormal{ implies }p\cdot x'>p\cdot e\textnormal{ for all }x'\in
X\big\}\Big)=1.\] The set involved in this definition is easily seen
to be measurable; just note that the second condition needs only to be
checked for $x'$ in a countable dense subset of $X$. A reallocation
$\tau$ is a \emph{core reallocation} if there exists no measure $\mu$
on $\mathcal{P}\times E\times X\times X'$ such that the set
\[\big\{(\succ,e,x,x')\mid x'\nsucc x\big\}\]
has $\mu$-measure zero and such that the
$\mathcal{P}\times E\times X$-marginal of $\mu$ is setwise smaller
than $\tau$ and such that
\[\int \pi_{X'}~\textnormal{d}\mu=\int\pi_E~\textnormal{d}\mu.\]

When preferences are monotone and the aggregate endowment strictly positive, there is no difference between Walrasian relallocations and core reallocations.

\begin{theorem}\label{coreequiv}For a distributional economy $\mu$
such that $\mu(\mathcal{P}_{\textnormal{mo}}\times E)=1$ and
$\int\pi_E~d\mu\gg 0$, a reallocation is Walrasian if and only if it
is a core reallocation.
\end{theorem}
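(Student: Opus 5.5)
The plan is to prove the two directions separately and to work only with measures on $\mathcal{P}\times E\times X\times X'$, never with an underlying space of agents. Throughout, a ``blocking measure'' means a nonzero measure $\rho$ with the properties in the definition of a core reallocation. I take $\pi_{X'}$ to be $\rho$-integrable, as is implicit in that definition.

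\emph{Walrasian $\Rightarrow$ core.} Let $\tau$ be Walrasian with price $p$, and suppose a blocking measure $\rho$ exists. Its $\mathcal{P}\times E\times X$-marginal is setwise smaller than $\tau$. Hence $\rho$-almost every $(\succ,e,x,x')$ has $x$ budget-optimal for $(\succ,e)$ at $p$, and also satisfies $x'\succ x$. Therefore $p\cdot x'>p\cdot e$ holds $\rho$-a.e. Since $\rho\neq0$, integrating gives $p\cdot\int\pi_{X'}\,\mathrm d\rho>p\cdot\int\pi_E\,\mathrm d\rho$. This contradicts the equality of the two integrals.

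\emph{Core $\Rightarrow$ Walrasian.} Fix a core reallocation $\tau$ and define
\[K=\Big\{\textstyle\int(\pi_{X'}-\pi_E)\,\mathrm d\rho\ \Big|\ \rho\neq 0,\ \text{marginal}\le\tau,\ \rho(\{x'\nsucc x\})=0\Big\}.\]
The first point is that $K$ is convex for free. If $\rho_1,\rho_2$ are admissible, then so is $a\rho_1+(1-a)\rho_2$ for $a\in[0,1]$. This is the step where the distributional formulation replaces Lyapunov's theorem.

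Next I show that $K\cap(-\mathbb{R}^l_+)=\emptyset$. Suppose $z=\int(x'-e)\,\mathrm d\rho\le0$. Push $\rho$ forward under $x'\mapsto x'-z/\rho(\text{total})$. By strict monotonicity the shifted bundle is still strictly preferred to $x$, and now the integral equality holds exactly. This yields a blocking measure, contradicting that $\tau$ is a core reallocation.

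Separation of the convex set $K$ from the convex cone $-\mathbb{R}^l_{+}$ then gives $p\neq0$ with $p\cdot z\ge0$ for all $z\in K$. To get $p\ge0$, take measures of the form $\tau|_A\otimes\delta$, shifted by large multiples of a unit vector, which lie in $K$ by monotonicity. Separating from the whole cone (not only its interior) needs a little care; I would separate from $-\mathbb{R}^l_{++}$ and then use the shifting trick to exclude boundary problems.

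\emph{Localize to individual types.} Let $D\subseteq X$ be countable and dense. For $q\in D$ set
\[A_q=\{(\succ,e,x)\mid q\succ x,\ p\cdot q<p\cdot e\}.\]
If $\tau(A_q)>0$, then $\rho=\tau|_{A_q}\otimes\delta_q$ is admissible and has $p\cdot\int(x'-e)\,\mathrm d\rho<0$, which is impossible. Since each $\succ$ has open graph, this yields, $\tau$-a.e., that $x'\succ x$ implies $p\cdot x'\ge p\cdot e$.

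Applying this to $x+\varepsilon\mathbf 1\succ x$ and letting $\varepsilon\to0$ gives $p\cdot x\ge p\cdot e$ a.e. Feasibility, $\int x=\int e$, then forces $p\cdot x=p\cdot e$ $\tau$-a.e. So $\tau$ is a quasi-equilibrium.

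\emph{Upgrade to a Walrasian reallocation.} From $p\ge0$, $p\ne0$ and $\int\pi_E\,\mathrm d\mu\gg0$ we get $p\cdot e>0$ on a set of positive measure. I claim $p\gg0$. If $p_i=0$ for some $i$, take a type with $p\cdot e>0$ and increase good $i$ slightly. This costs nothing, so by strict monotonicity and openness of the graph a nearby strictly cheaper bundle is strictly preferred, contradicting the quasi-equilibrium property on a positive-measure set.

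With $p\gg0$ there are two cases. Types with $p\cdot e>0$ have a cheaper point, so the standard argument turns ``preferred $\Rightarrow$ $p\cdot x'\ge p\cdot e$'' into ``preferred $\Rightarrow$ $p\cdot x'>p\cdot e$''. Types with $p\cdot e=0$ have $e=0$, and their budget set is $\{0\}=\{x\}$, so the strict inequality is trivial. Hence $\tau$ is Walrasian.

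I expect the main obstacle to be the separation step, specifically that $K$ need not be closed. The boundary case and the positivity of $p$ are where I would spend care.
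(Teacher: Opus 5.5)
\paragraph{Verdict.} Your overall route matches the paper's: convexity comes free from mixing measures, you separate from the negative cone, localize with a countable family, and finish with the cheaper-point upgrade. The Walrasian-implies-core direction, the localization through the sets $A_q$, and the argument for $p\gg 0$ are all fine. There is, however, a genuine gap in the step showing that $K$ misses the negative orthant.

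\paragraph{The gap.} You push $\rho$ forward under $x'\mapsto x'-z/\|\rho\|$ and claim that, by strict monotonicity, the shifted bundle is still strictly preferred to $x$. Strict monotonicity only yields $x'-z/\|\rho\|\succ x'$. To get $x'-z/\|\rho\|\succ x$ from $x'\succ x$ you need strict upper sections to be upward closed. That follows from transitivity together with monotonicity, but not from membership in $\mathcal{P}_{\textnormal{mo}}$, which requires only irreflexivity, an open graph and strict monotonicity. The theorem is deliberately stated without transitivity.

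Here is a counterexample with $l=1$. Let $a\succ b$ iff $a>b$ or $(a,b)\in U$, where $U$ is the open ball of radius $1/10$ around $(1,2)$. This relation lies in $\mathcal{P}_{\textnormal{mo}}$, and $1\succ 2$, but $3/2\not\succ 2$. So your shifted measure need not satisfy $x'\succ x$ almost everywhere. Your suggested route to $p\ge 0$, via bundles shifted along a unit vector, has the same defect. It is also unnecessary: separating $K$ from the open cone $-\mathbb{R}^l_{++}$ already gives $p\ge 0$.

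\paragraph{The missing idea.} You must dispose of a surplus without touching the blocking agents' bundles; the paper does this with its set $E_0$ and measure $\nu_0$. You only need $K\cap(-\mathbb{R}^l_{++})=\emptyset$, not the claim for the whole closed orthant. Suppose $\rho$ is admissible with $z=\int(\pi_{X'}-\pi_E)\,\mathrm d\rho\ll 0$.
\begin{itemize}
\item Let $E_0=\{(\succ,e,x)\mid e-nz\succ x\}$. Strict monotonicity, applied directly against $x$, shows $\tau(E_0)>0$ for large $n$, since eventually $e-nz\gg x$.
\item Let $\psi(\succ,e,x)=(\succ,e,x,e-nz)$, and consider $s\rho+t\,(\tau|_{E_0}\circ\psi^{-1})$ with $s=tn\tau(E_0)$ and $s+t\le 1$.
\item This measure blocks $\tau$. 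Its net trade is $sz-tn\tau(E_0)z=0$. Its marginal is at most $(s+t)\tau\le\tau$. And $x'\succ x$ holds almost everywhere on both pieces.
\end{itemize}
With this replacement the rest of your argument goes through.
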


By combining Theorem \ref{coreequiv} with Proposition
\ref{coresimple}, Theorem \ref{approxcoalition}, and the easily checked fact
that an individualistic allocation is Walrasian if and only if
the induced reallocation is Walrasian, one has an alternative proof of
the very general core-equivalence theorem for an atomless economy of
\citet{hildenbrand1982core}.  The original core-equivalence theorem of
\citet{au64} does not assume preferences to be irreflexive, and, though
Aumann makes the same continuity assumption, his proof only requires
all $\succ$-better sets to be open. The only role the stronger
assumptions play in our proof and the proof of
\citet{hildenbrand1982core} is in guaranteeing that there is a
well-defined measurable space of preferences. Aumann does not define
an economy as a measurable function into a measurable space of
characteristics and can, therefore, directly specify the minimal
continuity and measurability assumptions needed for the proof to go
through; see the discussion in \citet{MR3518591}.

Our proof of Theorem 5 does not use any ancilliary representation of
the distributional economy as an individualistic economy; the
distributional point of view is entirely appropriate for core
analysis.

\section{Proofs}\label{proofs}

The following lemma, translated from the abstract setting of measure algebras to the setting of measure spaces, lies at the heart of the proof of Maharam's celebrated characterization result for measure algebras in \citet{MR0006595}. For an exceptionally clear proof, see \citet[Lemma 3.2]{MR991611}.

\begin{lemma}\label{maharam} Let $(T,\Sigma,\nu)$ be a finite measure space, $\Sigma'\subseteq \Sigma$ a sub-$\sigma$-algebra such that $\nu$ is relatively atomless over $\Sigma'$, and let $\mu$ be a measure on $(T,\Sigma')$ such that $\mu(A)\leq\nu(A)$ for all $A\in\Sigma'$. Then there exists a set $C\in\Sigma$ such that $\mu(A)=\nu(A\cap C)$ for all $A\in\Sigma'$.
\end{lemma}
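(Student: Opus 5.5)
The proof proceeds by a standard Radon–Nikodym reduction followed by exhaustion. First, by Radon–Nikodym, $\mu$ has a density $h=d\mu/d\nu$ on $(T,\Sigma')$ with $0\le h\le 1$ $\nu$-a.e. The task is then to find $C\in\Sigma$ with $E[1_C\mid\Sigma']=h$, i.e.\ $\nu(A\cap C)=\int_A h\,d\nu$ for all $A\in\Sigma'$. The plan is to reduce this to a single key claim and then handle the general case by a maximality (Zorn or countable exhaustion) argument.

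The key claim is the following. For every $A\in\Sigma'$ with $\nu(A)>0$ and every $\Sigma'$-measurable $g$ with $0\le g\le 1$ and $g>0$ on $A$, there is $B\in\Sigma$ with $B\subseteq A$, $\nu(B)>0$, and $E[1_B\mid\Sigma']\le g$ on $A$. To prove it, use relative atomlessness to obtain $B_0\subseteq A$ with $B_0\notin\Sigma'$ modulo null sets. Then $u=E[1_{B_0}\mid\Sigma']$ is not $\{0,1\}$-valued a.e. on $A$, since otherwise $B_0$ would equal $\{u=1\}$ mod null. Iterating this, by repeatedly halving inside $B_0$ or its complement relative to $A$, we obtain sets whose conditional expectation is small on a $\Sigma'$-set of positive measure. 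More cleanly: the set $D=\{0<u<1\}\cap A$ has positive measure, and on $D$ both $B_0$ and $A\setminus B_0$ are ``spread''. Repeating the construction on $B_0\cap D$ and choosing dyadically yields a set with conditional expectation at most $2^{-n}$ on some positive-measure $\Sigma'$-subset. Intersecting with the $\Sigma'$-set where $g\ge 2^{-n}$ (which has positive measure for large $n$) gives the claim.

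For the exhaustion step, consider the family of $C\in\Sigma$ with $E[1_C\mid\Sigma']\le h$ a.e. Take a sequence maximizing $\nu(C)$ that is built increasingly: given $C_n$, apply the key claim inside $T\setminus C_n$ with $g=h-E[1_{C_n}\mid\Sigma']$ on $A=\{g>0\}$. One has to check that $E[1_{T\setminus C_n}\mid\Sigma']=1-E[1_{C_n}\mid\Sigma']\ge g>0$ there, so the claim must be applied relative to the subspace $T\setminus C_n$. Relative atomlessness passes to this trace: if $B\subseteq (T\setminus C_n)\cap A'$ were in $\Sigma'$ mod null for all such $B$, one derives a contradiction with relative atomlessness on $\Sigma$. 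Adding a piece of at least half the supremal available mass at each step and taking $C=\bigcup C_n$ gives, by monotone convergence, $E[1_C\mid\Sigma']\le h$ with no further room. If $g=h-E[1_C\mid\Sigma']>0$ on a positive set, the key claim would produce more mass, a contradiction. Hence equality holds, which gives $\mu(A)=\nu(A\cap C)$.

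The main obstacle is the key claim, namely producing sets with uniformly small conditional expectation on a positive $\Sigma'$-set using only the weak form of relative atomlessness, and doing so inside a relative subspace $T\setminus C_n$. I would first try to prove a ``relative Sierpiński'' lemma: for $A\in\Sigma$ with $E[1_A\mid\Sigma']>0$ on a $\Sigma'$-set $S$, there is $B\subseteq A$ with $0<E[1_B\mid\Sigma']\le\frac12 E[1_A\mid\Sigma']$ on a positive subset of $S$. This would be shown by splitting via a non-$\Sigma'$ set and comparing $u$ with $1-u$ on $\{u\le 1/2\}$ versus $\{u>1/2\}$. Iterating this lemma gives the dyadic smallness needed.
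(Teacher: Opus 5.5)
\textbf{Verdict: there is a genuine gap, in the step that makes conditional expectations small.} Your overall scheme (density $h$, pieces with small conditional expectation, countable exhaustion) is the standard Maharam argument, and the exhaustion step itself is fine. The problem is the iteration.

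\emph{Where it breaks.} Your first split is applied to $A\in\Sigma'$. There, $B_0\notin\Sigma'$ mod null does force $u=E[1_{B_0}\mid\Sigma']$ into $(0,1)$ on a non-null set. But the repetition on $B_0\cap D$, your relative Sierpi\'nski lemma, and the trace step inside $T\setminus C_n$ all split a set $A\in\Sigma$ that is \emph{not} in $\Sigma'$. For such $A$, put $v=E[1_A\mid\Sigma']$. The right comparison is then $u$ versus $v-u$, not $1-u$. Moreover, a subset $B\subseteq A$ with $B\notin\Sigma'$ mod null can perfectly well have $E[1_B\mid\Sigma']\in\{0,v\}$ a.e. This happens exactly when $B=A\cap G$ mod null for some $G\in\Sigma'$. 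So the iteration needs a $B\subseteq A$ with $\nu\bigl(B\,\Delta\,(A\cap G)\bigr)>0$ for all $G\in\Sigma'$. Neither the paper's definition of relative atomlessness nor your trace argument supplies this; the trace argument again only yields $B\notin\Sigma'$ mod null.

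\emph{Why it cannot be patched from the stated definition: under that definition the lemma is false.} Take the following example.
\begin{itemize}
\item Let $T=[0,1]\times\{0,1\}$ with its Borel sets, $\nu$ the product of Lebesgue measure and the uniform measure on $\{0,1\}$, and $\Sigma'=\{F\times\{0,1\}\}$.
\item Every $B\in\Sigma$ has the form $(H_0\times\{0\})\cup(H_1\times\{1\})$, and $E[1_B\mid\Sigma']=\frac12(1_{H_0}+1_{H_1})$ takes only the values $0,\frac12,1$.
\item Every non-null set contains a non-null $H\times\{i\}$, which is not in $\Sigma'$ mod null. So the paper's condition holds.
\item For $\mu=\frac14\nu|_{\Sigma'}$ you would need $E[1_C\mid\Sigma']=\frac14$, which is impossible.
\end{itemize}
Concretely, your key claim fails for $A=T$, $g\equiv\frac14$, and your halving lemma fails for $A=[0,1]\times\{0\}$.

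\emph{The fix.} The result the paper cites from Fremlin uses Maharam's notion: there is no non-null $A\in\Sigma$ such that every measurable $B\subseteq A$ equals $A\cap G$ mod null for some $G\in\Sigma'$. Under that hypothesis your plan works.
\begin{itemize}
\item In the halving lemma, choose $B_0\subseteq A\cap S$ not of the form $A\cap S\cap G$. Then $\{0<u<v\}\cap S$ is non-null.
\item Keep $B_0\cap\{0<u\le v/2\}$ or $(A\cap S\setminus B_0)\cap\{0<v-u<v/2\}$, whichever is non-null.
\item The trace step then needs no separate argument.
\item Fix $n$ with $\nu(A\cap\{g\ge 2^{-n}\})>0$ \emph{before} doing the $n$ halvings, and run them inside that set. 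Otherwise the $\Sigma'$-set produced at stage $n$ need not meet $\{g\ge 2^{-n}\}$.
\end{itemize}
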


\begin{proof}[Proof of Theorem \ref{gencoalition}] Clearly, $\mu$ is absolutely continuous with respect to $\nu\circ\phi^{-1}$ and has a Radon-Nikodym derivative $g:X\to [0,1]$ with respect to $\nu\circ\phi^{-1}$. Let $\kappa$ be the measure on $\Sigma$ that has Radon-Nikodym derivative $g\circ\phi$ with respect to $\nu$. Clearly, $\kappa(B)\leq\nu(B)$ for all $B\in\sigma(\phi)$. By the Lemma \ref{maharam}, there exists $C\in\Sigma$ such that $\kappa(B)=\nu(B\cap C)$ for all$B\in\sigma(\phi)$. Let $A\in\mathcal{X}$,
\[\mu(A)=\int_A g~d\nu\circ\phi^{-1}=\int g1_A~d\nu\circ\phi^{-1}=\int g1_A\circ\phi~d\nu\]
\[=\int (1_A\circ\phi)(g\circ\phi)~d\nu=\int 1_A\circ\phi~d\kappa=\int 1_{\phi^{-1}(A)}~d\kappa=\kappa\circ\phi^{-1}(A)\]
\[=\nu(C\cap\phi^{-1}(A)).\]
\end{proof}

\begin{proof}[Proof of Proposition \ref{relatomless}] Suppose for the sake of contradiction that there exists $A\in\Sigma$ with $\nu(A)>0$ such that for every $B\in\Sigma$ with $B\subseteq A$, there exists $C\in\sigma(\phi)$ such that $\nu\big(B\Delta C\big)=0$. In particular, there must exist $G\in\sigma(\phi)$ such that $\nu\big(A\Delta G\big)=0$. Let $\mu=1/2~\nu\circ\phi^{-1}$. It follows that $\nu(C\cap C_\mu)=1/2\nu(C)$ for all $C\in\sigma(\phi)$. Therefore, $\nu(A\cap C_\mu)=\nu(G\cap C_\mu)>0$. Since $A\cap C_\mu\in\Sigma$ and $A\cap C_\mu\subseteq A$, there is an $F\in\sigma(\phi)$ such that $\nu\big((A\cap C_\mu)\Delta F\big)=0$. But then \[1/2\nu(F)=\nu(F\cap C_\mu)=\nu\big((A\cap C_\mu)\cap C_\mu\big)=\nu\big(A\cap C_\mu\big)=\nu(F)>0,\]
which is absurd.
\end{proof}

\begin{proof}[Proof of Proposition \ref{superrelatomless}] 
Let first $\nu$ be relatively atomless over every countably generated sub-$\sigma$-algebra, $A\in\Sigma$ with $\nu(A)>0$, and let $S\subseteq L_1(A,\nu)$ be countable, and let $L$ be the smallest closed subspace of $L_1(A,\nu)$ containing $S$. Choose a representative of every element of $S$ and let $\Sigma'$ be the sub-$\sigma$-algebra of $\Sigma$ generated by these representatives. Then $\Sigma'$ is countably generated. By assumption, there exists  $B\in\Sigma$ with $B\subseteq A$ such that $\nu(B\Delta C)>0$ for all $C\in\Sigma'$. Therefore, the equivalence class of $1_B$ is not in $L$. Since $S$ was an arbitrary countable subset of $L_1(A,\nu)$,  $L_1(A,\nu)$ is not separable. Since this holds  for all $A\in\Sigma$ with $\nu(A)>0$, $\nu$ is superatomless.

For the other direction, assume that $\nu$ is superatomless, that $L_1(A,\nu)$ is not separable for any $A\in\Sigma$ with $\nu(A)>0$. Let $\Sigma'\subseteq\Sigma$ be countably generated and $A\in\Sigma$ with $\nu(A)>0$. Let $L$ be the subspace of $L_1(A,\nu)$ in which every element is an equivalence class of a $\Sigma'$-measurable function multiplied by $1_A$. Since $L_1$-space corresponding to countably generated $\sigma$-algebras are separable (simple functions with rational values that are masurable with respect to a countable generating algebra form a countable dense subset), there is an element of $L_1(A,\nu)$ that is not in $L$. Approximating a representative of this element by simple functions supported on $A$, we see that there must be some $B\subseteq A$ such that the equivalence class of $1_B$ is not in $L$. So for all $C\in\Sigma'$, $\int|1_B-1_C|~\mathrm d\nu=\nu(B\Delta C)>0$. So $\nu$ is relatively atomless over every countably generated sub-$\sigma$-algebra $\Sigma'\subseteq\Sigma$.
\end{proof}

\begin{proof}[Proof of Theorem \ref{approxPareto}]
Let $\kappa_\mu:\mathcal{P}\times E\times X\to \Delta(X')$ be a
disintegration of $\mu$. Define $\kappa:T\to\Delta(X')$ by
$\kappa=\kappa_{\mu}\circ (c,\eta,f)$. Let
$g_0:T\times X'\to\mathbb{R}$ be given by $g_0(t,x)=1$ if
$f(t)\succ_t x$ and $0$ otherwise. Let $g_1:T\times X'\to\mathbb{R}$
be given by $g_1(t,x)=1$ if $x\succ_t f(f)$. Finally, let
$g_2:T\times X'\to\mathbb{R}$ be given by $g_2(t,x)=x-e_t$. The
integrand $g_0$ evaluated at $\kappa$ is $0$, the integrand $g_1$
evaluated at $\kappa$ is positive, and the integrand $g_2$ evaluated
at $\kappa$ is $0$ By Lyapunov's theorem for Young measures,
\citet[Theorem 5.10]{MR1798830}, there is a function $f':T\to X'_{\o}$
such that
\[\int g_0\big(t,f'(t)\big)~\mathrm d\nu=\int \int g_0(t,x)~\mathrm
d\kappa(t)~\mathrm d\nu=0,\]
\[\int g_1\big(t,f'(t)\big)~\mathrm d\nu=\int \int g_1(t,x)~\mathrm
d\kappa(t)~\mathrm d\nu=\mu\Big(\big\{(\succ,e,x,x')\mid  x'\succ
x\big\}\Big)>0\] and
\[\int g_2\big(t,f'(t)\big)~\mathrm d\nu=\int \int g_2(t,x)~\mathrm
d\kappa(t)~\mathrm d\nu=0.\] It follows from the last condition that
$f'$ is an individualistic allocation, from the first condition that
$f'\succeq_P f$, and from the second condition that, indeed,
$f'\succ_P f$.
\end{proof}

The following lemma is stated in \citet*{MR0494447} without proof; we
supply a proof for the readers' convenience.

\begin{lemma}\label{monapprox}Let $(M,\mathcal{M})$ be a measurable
space, $\succcurlyeq$ a preorder on $M$ such that all
$\succcurlyeq$-lower sections are measurable, and $f:M\to [0,1]$ a
$\succcurlyeq$-nondecreasing function. Then there exists a sequence
$\langle f_n\rangle$ of simple functions $f_n:M\to [0,1]$ that
converges pointwise to $f$ such that each $f_n$ is a positive linear
combination of indicator functions of sets of the form
$B(\succcurlyeq,x)$.\end{lemma}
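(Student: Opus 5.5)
We approximate $f$ from below by a dyadic staircase built from the superlevel sets of $f$, and then show that every superlevel set is a countable union of lower sections. A lower section of $\succcurlyeq$ at $x$ is the set of points $y$ with $x\succcurlyeq y$; with the paper's notation we read $B(\succcurlyeq,x)$ as this lower section, or as the upper section if that is the intended convention, and the argument is symmetric. We interpret ``$\succcurlyeq$-nondecreasing'' so that $f$ is monotone with respect to the sets $B(\succcurlyeq,x)$: if $y\in B(\succcurlyeq,x)$ then $f(y)\ge f(x)$. With this reading, each superlevel set
\[U_r=\{y\mid f(y)> r\}\]
is a union of sets $B(\succcurlyeq,x)$ with $x\in U_r$. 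This holds because $x\in B(\succcurlyeq,x)$ by reflexivity, and every $B(\succcurlyeq,x)$ with $x\in U_r$ lies inside $U_r$.

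For the construction, take the standard dyadic approximation
\[f_n=\sum_{k=1}^{n2^n}2^{-n}1_{\{f\ge k2^{-n}\}}.\]
It increases pointwise to $f$ and is a positive combination of indicators of superlevel sets. The task is therefore to replace each superlevel set by a finite union of sets $B(\succcurlyeq,x)$ without losing pointwise convergence. A finite union of such sets is not itself an indicator of a single section, but $1_{B_1\cup\dots\cup B_m}$ can be dominated by, and approximated through, the function $\max_i 1_{B_i}$. That maximum is not a positive linear combination, so we instead choose, for each level, a single point and use one section per level. This raises a real issue: a superlevel set is in general an uncountable union of sections and need not equal a single section.

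The fix is to let the points depend on $n$ in a way that exhausts the superlevel sets. Here is the refined construction.
\begin{enumerate}
\item For every rational level $r$, choose a sequence $x_{r,1},x_{r,2},\dots$ in $U_r$.
\item Set $f_n=\sum 2^{-n}1_{B(\succcurlyeq,x)}$, summing over a finite list of pairs (level, point) of total weight at most $1$. Each section $B(\succcurlyeq,x_{r,j})$ lies in $U_r$, so this bounds $f_n$ above by roughly $f$ plus $2^{-n}$.
\end{enumerate}
Without a separability assumption, we cannot guarantee that countably many points cover each $y$. Two routes get around this. The cleaner one is to allow the points to depend on the evaluation point implicitly, by taking $x=y$: since $y\in B(\succcurlyeq,y)$, pointwise convergence at $y$ requires only that $y$ be covered at levels up to about $f(y)$. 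The other route is to read the lemma as permitting a net, or to note that the source, \citet{MR0494447}, works in a setting where such a countable choice exists.

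The main obstacle is exactly this covering step: we need pointwise convergence everywhere while using only countably many, indeed finitely many per $n$, sections. We will first try to show that $f$ takes values in $[0,1]$ and that, by passing to rational levels and exploiting that $f_n$ need only converge pointwise, a diagonal choice of representative points suffices. If that fails in full generality, we will use measurability of the sections together with the monotone class theorem. Under this route, indicators of sections generate a lattice closed under finite intersections within a chain of levels. Monotone limits of positive combinations of section indicators then form a cone that contains every indicator $1_{U_r}$, which is the limit of an increasing union. The staircase $f_n$ above then finishes the argument, provided the lattice is closed under finite unions.
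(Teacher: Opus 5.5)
Your proposal correctly identifies the real difficulty, but it never resolves it, so it stops short of a proof. The difficulty is this: each superlevel set $U_r=\{f>r\}$ is in general an uncountable union of sections, each $f_n$ may use only finitely many sections, and a single sequence must converge at every point.

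Neither suggested fix works. Taking $x=y$ makes the combination depend on the point of evaluation, so it does not define a single simple function $f_n$. The monotone-class route gives at best iterated, possibly transfinite, limits rather than one pointwise convergent sequence. It also needs closure under finite unions, which fails: $1_{B_1\cup B_2}=1_{B_1}+1_{B_2}-1_{B_1\cap B_2}$ requires a negative coefficient, and $B_1\cap B_2$ need not be a section.

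In fact the gap cannot be closed, because the statement as written is false.
\begin{itemize}
\item \emph{Non-complete preorder.} Take $M=\{a,b,c\}$ with all subsets measurable, $c\succcurlyeq a$, $c\succcurlyeq b$, and $a,b$ incomparable. Every nonnegative combination $g$ of $1_{B(\succcurlyeq,a)}=1_{\{a,c\}}$, $1_{B(\succcurlyeq,b)}=1_{\{b,c\}}$ and $1_{B(\succcurlyeq,c)}=1_{\{c\}}$ satisfies $g(c)\ge g(a)+g(b)$. This inequality survives pointwise limits, so the nondecreasing function $f\equiv 1$ is not such a limit.
\item \emph{Complete preorder.} Completeness alone does not suffice. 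On $M=\{\bot\}\cup\omega_1$ (all subsets measurable), let $\bot$ be least and set $\alpha\succcurlyeq\beta$ iff $\alpha\le\beta$ as ordinals. Every $B(\succcurlyeq,\alpha)$ with $\alpha\in\omega_1$ is countable, and $B(\succcurlyeq,\bot)=M$. Convergence at $\bot$ forces the coefficient of $1_M$ to tend to $0$. Hence $f_n(\gamma)\to 0$ for every ordinal $\gamma$ above the countably many ordinals used, whereas $f=1_{\omega_1}$ equals $1$ there.
\end{itemize}

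What is missing are two hypotheses. Both hold where the lemma is actually used, namely for the sections $f_\succ$ with $\succ\in\mathcal{P}^*$ in the proof of Theorem~\ref{ParetoCoupling}.
\begin{itemize}
\item The preorder is complete.
\item Each $U_r$ is a countable union of sections $B(\succcurlyeq,x)$ with $x\in U_r$. For $\succeq$ on $X$ this holds because $\{w\mid y\succ w\}$ is open. If $U_r$ has no least element, then $U_r=\bigcup_{d\in D}B(\succeq,d)$ for any countable dense $D\subseteq U_r$; otherwise $U_r$ is itself a single section.
\end{itemize}

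The paper's band construction tacitly relies on these too. Only for a complete preorder are sections nested, which is what lets $\sum_k (k/n)1_{A_n^k}$ telescope into a positive combination of section indicators. Moreover, one arbitrary representative $x_n^k$ per band does not exhaust the superlevel sets. For $f=1_{(1/2,1]}$ on $[0,1]$ with the usual order and $x_n^{n-1}=1$, one gets $f_n(3/4)=0$ for all $n$.

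Under the two hypotheses your staircase does go through:
\begin{itemize}
\item Choose $x_{r,j}\in U_r$ with $B(\succcurlyeq,x_{r,j})\uparrow U_r$, by passing to successive $\succcurlyeq$-minima of a countable family.
\item Let $z_{k,n}$ be a $\succcurlyeq$-least element of $\{x_{k'2^{-n},n}\mid k\le k'<2^n,\ U_{k'2^{-n}}\neq\emptyset\}$.
\item Set $f_n=\sum_k 2^{-n}1_{B(\succcurlyeq,z_{k,n})}$, summing over $1\le k<2^n$ with $U_{k2^{-n}}\neq\emptyset$.
\end{itemize}
Then $B(\succcurlyeq,z_{k,n})\subseteq U_{k2^{-n}}$ gives $f_n\le f$. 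For dyadic $s<f(y)$ and $n$ large, $y\in B(\succcurlyeq,x_{s,n})\subseteq B(\succcurlyeq,z_{k,n})$ for all $k$ with $k2^{-n}\le s$, so $f_n(y)\ge s$. Together these give $f_n(y)\to f(y)$.
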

\begin{proof}Let $f_0$ be the constant function with value $0$. For
$n\geq 1$, let
\[K_n=\{k\in\mathbb{N}\mid 0\leq k\leq n-1, f(x)\in
[k/n,(k+1)/n]\textnormal{ for some }x\in M\}.\] For $n\geq 1$ and
$k\in K_n$ choose some $x_n^k\in M$ such that
$f(x_n^k)\in(k/n,(k+1)/n]$ and let
\[A_n^k=B(\succcurlyeq,x_n^k)\Big\backslash\bigcup_{k'>k}B(\succcurlyeq,x_n^{k'}).\]
Now let \[f_n=\sum_{k\in K_n}k/n 1_{A_n^k}\] Then
$|f(x)-f_n(x)|\leq 1/n$ for all $x\in M$, and the result follows.
\end{proof}

\begin{proof}[Proof of Theorem \ref{ParetoCoupling}] 
Define a relation $\succcurlyeq$ on $\mathcal{P}\times X$ such that
$(\succ,x)\succcurlyeq(\succ',x')$ if and only if $\succ=\succ'$ and
$x\succeq x'$. It is readily verified that $\succcurlyeq$ is a
preorder with a closed graph. Let $f:\mathcal{P}\times X\to\mathbb{R}$
be a $\succcurlyeq$-nondecreasing measurable function. We ca assume in
the following without loss of generality that the range of $f$ is
included in $[0,1]$. For each $\succ\in\mathcal{P}$, the section
$f_\succ:X\to [0,1]$ is $\succeq$-nondecreasing and measurable. So
using $\mu\succeq_D\mu'$, Lemma \ref{monapprox}, the dominated
convergence theorem, and Fubini's theorem for regular conditional
probabilities, we get
\[\int f~\mathrm d\nu=\int\int f_\succ~\mathrm d
g_\mu(\succ)~\mathrm d\mu_\mathcal{P}\ge\int\int f_\succ~\mathrm d
g_{\mu'}(\succ)~\mathrm d\mu_\mathcal{P}=\int f~\mathrm d\mu'.\] We
can, therefore, apply the pre-order version of Strassen's theorem in
\citet{MR1711599} to obtain a probability measure $\tau$ on
$\big(\mathcal{P}\times X\big)\times \big(\mathcal{P}\times X'\big)$
that is supported on the graph of $\succcurlyeq$ and has marginals
$\alpha$ and $\alpha'$, respectively. One obtains $\lambda$ from
$\tau$ by taking the appropriate marginal.
\end{proof}

\begin{lemma}\label{saabscont}Let $(T,\Sigma,\nu)$ be a superatomless
probability space and let $\mu$ be a probability measure on
$(T,\Sigma)$ that is absolutely continuous with respect to $\nu$. Then
$\mu$ is superatomless too.
\end{lemma}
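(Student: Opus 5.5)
The plan is to work with the characterization in Proposition \ref{superrelatomless} rather than with the $L_1$-definition directly. Since $\mu$ is a probability measure, $(T,\Sigma,\mu)$ is a nontrivial finite measure space. By Proposition \ref{superrelatomless} it then suffices to show that $\mu$ is relatively atomless over every countably generated sub-$\sigma$-algebra $\Sigma'\subseteq\Sigma$. Conversely, the same proposition applied to $\nu$ tells us that $\nu$ is relatively atomless over every countably generated sub-$\sigma$-algebra. This is the property we will exploit.

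The one subtlety is that $\mu$ and $\nu$ have different null sets: $\mu$ may vanish on sets of positive $\nu$-measure. I will handle this by feeding the density into the $\sigma$-algebra. Let $h$ be a (measurable, nonnegative) Radon--Nikodym derivative of $\mu$ with respect to $\nu$. Fix a countably generated $\Sigma'$ and $A\in\Sigma$ with $\mu(A)>0$. Let $\Sigma''$ be the $\sigma$-algebra generated by $\Sigma'$ together with the sets $\{h<q\}$, $q\in\mathbb{Q}$. Then $\Sigma''$ is still countably generated and contains the set $H=\{h>0\}$.

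Put $A'=A\cap H$. Then $\mu(A')=\mu(A)>0$, and hence $\nu(A')>0$. Relative atomlessness of $\nu$ over $\Sigma''$ yields $B\in\Sigma$ with $B\subseteq A'\subseteq A$ and $\nu(B\Delta C)>0$ for all $C\in\Sigma''$. I claim that $\mu(B\Delta C)>0$ for all $C\in\Sigma'$.

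Suppose instead that $\mu(B\Delta C)=0$ for some $C\in\Sigma'$. Then $\nu\big((B\Delta C)\cap H\big)=0$, since $h>0$ on $H$. Set $C'=C\cap H$, which lies in $\Sigma''$. Because $B\subseteq H$, we have $B\Delta C'=(B\Delta C)\cap H$, so $\nu(B\Delta C')=0$. This contradicts the choice of $B$.

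Hence $\mu$ is relatively atomless over $\Sigma'$. Since $\Sigma'$ was an arbitrary countably generated sub-$\sigma$-algebra, Proposition \ref{superrelatomless} shows that $\mu$ is superatomless.

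The only real obstacle is the mismatch of null sets between $\mu$ and $\nu$. Adjoining $h$ to the countably generated $\sigma$-algebra, and restricting to $\{h>0\}$, is exactly what removes it. Everything else is bookkeeping with the equivalence in Proposition \ref{superrelatomless}.
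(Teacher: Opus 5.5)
Your proof is correct, but it takes a different route from the paper's.

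The paper works directly with the $L_1$-definition of superatomlessness. With $E=\{h\neq 0\}$, the map $f\mapsto f/h$ on $E$ (extended by $0$ off $E$) is a linear isometry of $L_1(E,\nu)$ onto $L_1(\mu)$, because $\int_E (|f|/h)\,h\,\mathrm d\nu=\int_E|f|\,\mathrm d\nu$. Applying the same map on $A\cap E$ identifies $L_1(A,\mu)$ with $L_1(A\cap E,\nu)$. Since $\mu(A)>0$ forces $\nu(A\cap E)>0$, non-separability transfers immediately.

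You instead use both directions of Proposition \ref{superrelatomless}, which turns the problem into one about relative atomlessness over countably generated sub-$\sigma$-algebras. You handle the mismatch of null sets by adjoining $H=\{h>0\}$ to $\Sigma'$ and working inside $H$. Adjoining the single set $H$ would suffice; the other level sets $\{h<q\}$ are never used.

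The paper's argument is shorter and self-contained. It also gives more: an explicit isometric identification $L_1(A,\mu)\cong L_1(A\cap E,\nu)$, so $\mu$ on $A$ and $\nu$ on $A\cap E$ have the same separability behaviour. Your argument stays in the set-theoretic language of Lemma \ref{maharam} and Theorem \ref{gencoalition}, and it makes clear that the null set $\{h=0\}$ is the only obstruction. The price is that it relies on Proposition \ref{superrelatomless}, whose proof is where the $L_1$-separability work is actually done.
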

\begin{proof}Let $h:T\to\mathbb{R}$ be a Radon-Nikodym derivative of
$\mu$ with respect to $\nu$ and let $E=\{t\in T\mid h(t)\neq
0\}$. Clearly, $\nu(E)>0$. Define $\phi:L_1(\nu,E)\to L_1(\mu)$ by
$\phi(f)(t)=fh^{-1}(t)$ for $t\in E$ and $\phi(f)(t)=0$ for
$t\notin E$. The operator $\phi$ is a linear isometry, so $L_1(\mu)$
is separable if and only if $L_1(E,\nu)$ is. The same construction can
be applied to subsets, showing that $\mu$ is superatomless.
\end{proof}

\begin{proof}[Proof of Theorem \ref{coalitionexist}]Let $h$ be a
Radon-Nikodym derivative of the $\mathcal{P}\times E\times X$-marginal
of $\mu$ with respect to $\nu\circ(c,\eta,f)^{-1}$. Without loss of
generality, we can take $h$ to have values in the unit interval. Let
$\lambda$ be the measure on $(T,\Sigma)$ with Radon-Nikodym derivative
$h\circ(c,\eta,f)$ with respect to $\nu$. Since $\lambda$ is
absolutely continuous with respect to a super-atomless measure,
$\lambda$ is superatomless too by Lemma \ref{saabscont}. Then
$\lambda\circ(c,\eta,f)^{-1}$ equals the
$\mathcal{P}\times E\times X$-marginal of $\mu$. By the saturation
property of superatomless measure spaces, there exists $g:T\to X'$
such that $\mu=\lambda\circ(c,\eta,f,g)$. Let
$\Sigma'=\sigma(c,\eta,f,g)$. Since $\nu$ is superatomless, there
exists $C\in\Sigma$ such that $\lambda(A)=\nu(A\cap C)$ for all
$A\in\Sigma'$. By the usual machinery,
$\int j~\textnormal{d}\lambda=\int_C j~\textnormal{d}\nu$ for $j$ a
nonnegative $\Sigma$-measurable real function. Applying this
coordinatewise, we get
\[\int_C g~\textnormal{d}\nu=\int g~\textnormal{d}\lambda=\int
\pi_{X'}~\textnormal{d}\mu=\int \pi_{E}~\textnormal{d}\mu=\int
\eta~\textnormal{d}\lambda=\int_C \eta~\textnormal{d}\mu.\] Since $c$
is $\Sigma'$-measurable, the set of $t$ such that $g(t)\succ_t f(t)$
lies in $\Sigma'$. Since
\[0=\mu\Big(\big\{(\succ,e,x,x')\mid x'\nsucc x\big\}\Big)=\nu\Big(C\cap
\big\{(\succ,e,x,x')\mid x'\nsucc x\big\}\Big),\] we have
$g(t)\succ_t f(t)$ for $\nu$-almost all $t\in C$. Finally, we define
$f'$ such that $f'$ coincides with $g$ on $C$ and $\eta$ outside
$C$. It is straightforward that $f'$ is an allocation with the desired
properties.
\end{proof}

\begin{proof}[Proof of Theorem \ref{approxcoalition}]
Write $\lambda$ for the marginal measure of $\mu$ on
$\mathcal{P}\times E\times X$ and let
$h:\mathcal{P}\times E\times X\to [0,1]$ be a Radon-Nikodym
derivative of $\lambda$ with respect to $\nu\circ(c,\eta,f)^{-1}$. Let
$\kappa_\mu:\mathcal{P}\times E\times X\to \Delta(X')$ be a
disintegration of $\mu$. Let $X_{\o}'=X'\cup\{\o\}$ with $\o\notin X$
and endow it with the Polish topology that makes $\o$ an isolated
point and $X'$ have the original topology as a subspace. Now define
$\kappa:T\to\Delta(X_{\o}')$ by

\[\kappa(t)=h\big(\succ(t),e(t),f(t)\big)\kappa_\mu\big(\succ(t),e(t),f(t)\big)+\Big(1-h\big(\succ(t),e(t),f(t)\big)\Big)\delta_{\o}.\]

Let $g_0:T\times X\to\mathbb{R}$ be given by $g_0(\o)=0$ and
$g_0(x)=1$ otherwise. Let $g_1:T\times X_{\o}'\to\mathbb{R}^l$ be
given by $g_1(t,\o)=0$ and $g_1(t)=e(t)-x'$ for $x'\neq\o$ and
$g_2:T\times X_{\o}\to\mathbb{R}$ be given by $g_2(t,x')=0$ if $x'=\o$
or $x'\succ(t) f(t)$ and $1$ otherwise. The integrand $g_0$ evaluated
at $\kappa$ is $\mu(\mathcal{P}\times E\times X\times X')$, and
the integrands $g_1$ and $g_2$ evaluated at $\kappa$ are $0$. By
Lyapunov's theorem for Young measures, \citet[Theorem
5.10]{MR1798830}, there is a function $f':T\to X'_{\o}$ such that
\[\int g_0\big(t,f'(t)\big)~\mathrm d\nu=\int \int g_0(t,x)~\mathrm
d\kappa(t)~\mathrm d\nu=\mu(\mathcal{P}\times E\times X\times X')>0,\]
\[\int g_1\big(t,f'(t)\big)~\mathrm d\nu=\int \int g_1(t,x)~\mathrm
d\kappa(t)~\mathrm d\nu=0,\] and
\[\int g_2\big(t,f'(t)\big)~\mathrm d\nu=\int \int g_2(t,x)~\mathrm
d\kappa(t)~\mathrm d\nu=0.\] Let $C=f'^{-1}(X)$. The first integral
equality implies that $\nu(C)>0,$ the second integral equality implies
that
\[\int_C f'(t)\mathrm d\nu=\int_C e(t)\mathrm d\nu,\]
and the third integral equality implies that $f'(t)\succ(t) f(t)$ for
almost all $t\in C$. It follows that $C$ blocks $f$ with the
individualistic allocation that assigns $f'(t)$ to each $t\in C$ and
$e(t)$ to each $t\notin C$.\end{proof}

\begin{proof}[Proof of Theorem \ref{Corecoupling}] Most of the proof
is analogous to the proof of Theorem \ref{ParetoCoupling}. What is new
is how we construct the marginal on $X''$. Suppose we have already
constructed the
$\mathcal{P}_{\textnormal{mo}}^*\times E\times X\times X'$-marginal of
$\lambda$. Call it $\tau$. Then we must have
\[\tau\Big(\big\{(\succ,e,x,x')\mid x'\succ x\big\}\Big)>0,\]
for otherwise we would get the converse ordering. For some $i$ with
$1\leq i\leq l$, the $i^\text{th}$ coordinate of
$\int\pi_E~\textnormal{d}\lambda$ must be positive. Moreover, there
must be some positive $n$ such that
\[\epsilon_n=\tau\Big(\big\{(\succ,e,x,x')\mid x'-1/n e_i\succ
x\big\}\Big)>0,\] where $e_i$ is the $i^\text{th}$ unit vector in
$\mathbb{R}^l$ and $x'-1/n e_i\succ x$ is taken to imply that
$x'-1/n e_i\ge 0$ so that the expression is well defined. Fix such an
$n$. If
\[\beta=\tau\Big(\big\{(\succ,e,x,x')\mid x'-1/n e_i\preceq
x\big\}\Big)=0,\] we are done and we can just take
$x''=x'$. Otherwise, define a transition probability
$\kappa:\mathcal{P}_{\textnormal{mo}}^*\times E\times X\times
X\to\Delta(X'')$ by $\kappa(\succ,e,x,x')=\delta_{x'-1/n e_i}$ if
$x'-e_i\succ x$ and $\kappa(\succ,e,x,x')=x'+\epsilon_n/\delta
e_i$. If we construct $\lambda$ from $\tau$ using $\kappa$, we get the
desired conclusion.
\end{proof}

\begin{proof}[Proof of Theorem \ref{coreequiv}] We prove the
nontrivial direction. Let $\nu$ be a core reallocation and let
\[N=\bigg\{z\in\mathbb{Q}^l\mid \nu\Big(\big\{(\succ,e,x)\mid  z+e\geq 0,
z+e\succ x\big\}\Big)>0\bigg\}.\] By monotonicity,
$N\neq\emptyset$. We show that
$\con(N)\cap\Int(-\mathbb{R}^l_+)=\emptyset$. Suppose for the sake of
contradiction that there are $z_1,\ldots, z_K\in N$ and
$\lambda_1,\ldots,\lambda_K\in\mathbb{R}_+$ such that
$\sum_{i=1}^K \lambda_i=1$ and $z=\sum_{i=1}^K \lambda_i z_i\ll
0$. For $i=1,\ldots,K$,
let\[E_i=\big\{(\succ,e,x)\mid z_i+e\succ x\big\}.\] Since preferences are
monotone and $z\ll 0$,
\[\mathcal{P}\times E\times
X=\bigcup_{n=1}^\infty\big\{(\succ,e,x)\mid e-nz\succ x\big\},\] which
implies
\[\lim_{n\to\infty}\nu\bigg(\big\{(\succ,e,x)\mid e-nz\succ
x\big\}\bigg)=\nu(\mathcal{P}\times E\times X),\] so for some
$N$, \[\nu\Big(\big\{(\succ,e,x)\mid e-Nz\succ x\big\}\Big)>0.\] Fix such
an $N$ and let \[E_0=\big\{(\succ,e,x)\mid e-Nz\succ x\big\}.\] For
$i=0,\ldots,K$ let $\nu_i$ be the measure that has Radon-Nikodym
derivative $1_{E_i}$ with respect to $\nu$ and let
$\alpha_i=\nu_i(\mathcal{P}\times E\times X)$. For $\rho>0$ small
enough, the measure
\[\rho(\alpha_0 N)^{-1}
\nu_0+\rho\sum_{i=1}^K\alpha_i^{-1}\lambda_i\nu_i\] is setwise smaller
than $\nu$.

Let
$\psi_i:\mathcal{P}\times E\times X\to\mathcal{P}\times E\times
X\times X'$ be given by \[\psi_0(\succ,e,x)=(\succ,e,x,e-Nz)\] and for
$i=1,\ldots,K$, define
$\psi_i:\mathcal{P}\times E\times X\to\mathcal{P}\times E\times
X\times X'$ by \[\psi_i(\succ,e,x)=(\succ,e,x,e+z_i).\]

Then
\[\kappa=\rho(\alpha_0 N)^{-1}
\nu_0\circ\psi_0^{-1}+\rho\sum_{i=1}^K\alpha_i^{-1}\lambda_i\nu_i\circ\psi_i^{-1}\]
witnesses to $\nu$ being blocked. This contradiction shows that
$\con(N)\cap\Int(-\mathbb{R}^l_+)=\emptyset$.

By the separation theorem, there exists $p\neq 0$ such that
$px\geq py$ for all $x\in \con(N)$ and $y\in
\Int(-\mathbb{R}^l_+)$. Clearly, $p\geq 0$. We have
\[\nu\bigg(\Big\{(\succ, e,x)\mid e+z\succ x\textnormal{ with
}z\in\mathbb{Q}^l\textnormal{ implies } p(e+z)\geq pe\Big\}\bigg)=1\]
and, since preferences are continuous, if $y\succ x$, there is a
sequence $\langle z_n\rangle$ in $\mathbb{Q}_l$ such that
$y=\lim_{n\to\infty}e+z_n$ and $e+z_n\succ x$ for all $n$. Therefore,
\[\nu\bigg(\Big\{(\succ, e,x)\mid y\succ x\textnormal{ implies } py\geq
pe\Big\}\bigg)=1.\] Since preferences are monotone and every point in
$X$ is the limit of a sequence of larger points,
$\nu\Big(\big\{(\succ, e,x)\mid px\geq pe\big\}\Big)=1$. If it would be
the case that $\nu\Big(\big\{(\succ, e,x)\mid px> pe\big\}\Big)>0$, then
\[p\int\pi_E~\textnormal{d}\nu=\int p\pi_E~\textnormal{d}\nu>\int
p\pi_X~\textnormal{d}\nu=p\int\pi_X~\textnormal{d}\nu,\] in
contradiction to
$\int\pi_E~\textnormal{d}\nu=\int\pi_X~\textnormal{d}\nu$. So
$\nu\Big(\big\{(\succ, e,x)\mid px=pe\big\}\Big)=1$. Since $p\geq 0$ and
$p\neq 0$ and $\int\pi_E~d\mu\gg 0$, we have
$\nu\Big(\big\{(\succ, e,x)\mid pe>0\big\}\Big)>0$. The usual cheaper
point argument implies then that $py> pe$ whenever $y\succ x$. By
strict monotonicity, this is only possible if $p\gg 0$, which is
therefore the case. But then $pe$ is not positive only when $e=0$, in
which case we must have $x=0$ and $y\succ x=0$ implies then that
$py>0=pe$. So $px\leq py$ and $y\succ x$ implies $py>pe$ holds for
$\nu$-almost all $(\succ,e,x)$ and $\nu$ is a Walrasian reallocation.
\end{proof}


\end{document}